\let\NAT@parse\undefined
\newtheorem{corollary}{Corollary}
\newtheorem{lemma}{Lemma}
\def \L {{\mathcal L }}
\def\blfootnote{\xdef\@thefnmark{}\@footnotetext}
\begin{document}
\title{\Huge The Fluctuating Two-Ray Fading Model: Statistical Characterization and Performance Analysis}

\author{
\vspace{3mm}
\authorblockN{ Juan M. Romero-Jerez, F. Javier Lopez-Martinez, Jos\'e F. Paris and Andrea J. Goldsmith}}


\maketitle
\begin{abstract}
\blfootnote{\noindent This work will be presented in part at IEEE Globecom 2016. J. M. Romero-Jerez is  with Departmento de Tecnolog\'ia Eectr\'onica, Universidad de Malaga - Campus de Excelencia Internacional Andaluc\'ia Tech., Malaga 29071, Spain. Contact e-mail: romero@dte.uma.es. F. J. Lopez-Martinez and J. F. Paris are with Departmento de Ingenier\'ia de Comunicaciones, Universidad de Malaga - Campus de Excelencia Internacional Andaluc\'ia Tech., Malaga 29071, Spain. A. Goldsmith is with the Wireless Systems Lab, Department of Electrical Engineering, Stanford University, CA, USA.
\\
\indent This work has been funded by the Consejer\'ia de Econom\'ia, Innovaci\'on, Ciencia y Empleo of the Junta de Andaluc\'ia, the Spanish Government
and the European Fund for Regional Development FEDER (projects P2011-TIC-7109, P2011-TIC-8238, TEC2014-57901-R,  TEC2013-42711-R and TEC2013-44442-P
\\
\indent This work has been submitted to the IEEE for publication. Copyright may be transferred without notice, after which this version may no longer be accessible.
}
We introduce the Fluctuating Two-Ray (FTR) fading model, a new statistical channel model that consists of two fluctuating specular components with random phases plus a diffuse component. The FTR model arises as the natural generalization of the two-wave with diffuse power (TWDP) fading model; this generalization allows its two specular components to exhibit a random amplitude fluctuation. Unlike the TWDP model, all the chief probability functions of the FTR fading model (PDF, CDF and MGF) are expressed in closed-form, having a functional form similar to other state-of-the-art fading models. We also provide approximate closed-form expressions for the PDF and CDF in terms of a finite number of elementary functions, which allow for a simple evaluation of these statistics to an arbitrary level of precision. We show that the FTR fading model provides a much better fit than Rician fading for recent small-scale fading measurements in 28 GHz outdoor millimeter-wave channels. Finally, the performance of wireless communication systems over FTR fading is evaluated in terms of the bit error rate and the outage capacity, and the interplay between the FTR fading model parameters and the system performance is discussed. Monte Carlo simulations have been carried out in order to validate the obtained theoretical expressions.
\end{abstract}

\vspace{0mm}
\begin{keywords}
Wireless channel modeling, envelope statistics, moment generating function, multipath propagation, Rician fading, small-scale fading, two-ray.
\end{keywords}

\IEEEpeerreviewmaketitle

\section{Introduction}
\label{intro}
The use of millimeter-wave (mmWave) bands to overcome the wireless spectrum shortage caused by the exponential increase in aggregate traffic is being embraced by emerging wireless standards such as 5G \cite{Boccardi2014}. This has led to significant research in mmWave radio communications in urban outdoor environments \cite{Rappaport2013}. 
Much of this research has focused on mmWave channel modeling \cite{Akdeniz2014,Metis2014,Rappaport2015,Hur2016}. 

Most of the stochastic channel models for mmWave communications assume Rayleigh or Rician distributions for the small-scale fading path amplitudes in NLOS and LOS scenarios, respectively. Very recently \cite{Samimi2016}, the small-scale fading statistics obtained from a 28 GHz outdoor measurement campaign showed that Rician fading was more suited than Rayleigh even in NLOS environments. However, a deeper look into the results of \cite{Samimi2016} indicates that conventional fading models in the literature do not accurately model the random fluctuations suffered by the received signal. In particular, the empirical CDFs and PDFs reported in \cite{Samimi2016} and \cite{Mavridis2015}, respectively, for different mmWave scenarios exhibit a bimodality that cannot be captured even by generalized fading models \cite{Yacoub07,Paris2014,Cotton2015}. 

We here propose a new amplitude fading model, the Fluctuating Two-Ray (FTR) fading model, whose statistical distribution captures the wide heterogeneity of random fluctuations a signal experiences in propagation environments with multiple scatterers. The FTR model is a natural generalization\footnote{We also note that the FTR model here proposed differs from the Generalized Two-Ray (GTR) model proposed in \cite{Rao2015}. Unlike the TWDP model, the GTR model allows the phase distributions of the specular waves to be other than uniform, but the amplitudes of the specular components are still kept constant.} of the two-wave with diffuse power (TWDP) fading model proposed by Durgin, Rappaport and de Wolf \cite{Durgin2002}. In this generalization, the constant-amplitude specular waves randomly fluctuate. The inclusion of an additional source of randomness allows for a better characterization of the amplitude fluctuations experienced by the radio signal, compared to the TWDP model (which is indeed included as a particular case of the FTR model). Remarkably, this larger flexibility does not come at the price of an increased mathematical complexity, but instead facilitates the analytical characterization of this new fading model.

The benefits of using the FTR fading model, which will be derived below, can be summarized as follows: (1) Despite being more general than the original TWDP model, the primary probability functions (CDF, PDF and MGF) of the FTR model are given in closed-form, (2) The FTR fading distribution is inherently bimodal, but also includes classical unimodal fading models like Rician, Nakagami-$m$, Hoyt and Rayleigh as particular cases; thus, it can be matched to a wider variety of propagation conditions than conventional fading models, and (3) The FTR fading distribution provides a much better fit than existing fading models to field measurements, for example the 28 GHz field measurements recently reported in \cite{Samimi2016}.

The remainder of this paper is structured as follows: the physical justification of the FTR fading model is introduced in Section \ref{systemmodel}. Then, in Section \ref{analysis}, the FTR fading model is statistically characterized in terms of its MGF, CDF and PDF. The empirical validation of our model is presented in Section \ref{fit} by fitting the FTR fading model to small-scale fading field measurements in the mmWave band. The performance of wireless communication systems operating under FTR fading is analyzed in Section \ref{performance}, with associated numerical results given in Section \ref{numerical}. Our main conclusions are summarized in Section \ref{conc}.

\section{Preliminaries and channel model}
\label{systemmodel}

The small-scale fluctuations in the amplitude of a signal transmitted over a wireless channel can be modeled by the superposition of a set of $N$ dominant waves, referred to as specular components, to which other diffusely propagating waves are added \cite{Durgin2002}. Under this model, the complex baseband voltage of a wireless channel experiencing multipath fading can be expressed as 
\begin{equation}
\label{eq:01}
V_r=\sum\limits_{n = 1}^N  V_n \exp \left( {j\phi _n } \right) + X + jY,
\end{equation}
where $ V_n \exp \left( {j\phi _n } \right)$ represents the \emph{n-th} specular component, which is assumed to have a constant amplitude $V_n$ and a uniformly distributed random phase $\phi _n $, such that $\phi _n \sim \mathcal{U}[0,2\pi)$. Since the distances traversed by the propagating waves are typically orders of magnitude greater than their wavelengths, the random phase variables of each specular component are assumed to be statistically independent. On the other hand, $X + jY$ is a complex Gaussian random variable, such that $X,Y \sim \mathcal{N}(0,\sigma^2)$, representing the diffuse received signal component due to the combined reception of numerous weak, independently-phased scattered waves. This Gaussian model is based on application of the central limit theorem to the sum of these numerous waves.

The general model presented in (\ref{eq:01}) includes very important statistical wireless channel models as particular cases. Thus, when $N=0$, i.e., no specular component is present, the Rayleigh fading model is obtained, while for  $N=1$, a single dominant specular component, we have the Rician fading model. The case in which there are two dominant specular components ($N=2$) is usually referred to as the Two Wave with Diffuse Power (TWDP) fading model or, alternatively, the Generalized Two-Ray fading model with Uniformly distributed phases (GTR-U) \cite{Rao2015}. This recently-developed model contains the aforementioned classical fading models as particular cases and accurately fits field measurements in a variety of propagation scenarios \cite{Durgin2002}. Unfortunately, the statistical characterization of the TWDP fading model is much more complicated than that of classical fading models, as there are not known closed-form expressions for the PDF and the CDF of the received signal envelope. Notably, the MGF of the power envelope in TWDP fading was recently derived in \cite{Rao2015}.

The specular components in the general model in (\ref{eq:01}) have constant amplitudes. We must here note that variations in the amplitude of the dominant specular components, often associated with the LOS propagation, have been considered in some specific scenarios and validated with field measurements: these are the cases of the Rician shadowed fading model \cite{Abdi2003} which generalizes the Rician fading model, or the $\kappa$-$\mu$ shadowed fading model introduced in \cite{Paris2014} as a generalization of Yacoub's $\kappa$-$\mu$ fading model. However, while the word ``shadowing'' was used when the models \cite{Abdi2003,Paris2014} were introduced, these models should not necessarily be linked to the large-scale fading phenomena also called shadowing, due to a complete or partial blockage by obstacles many times larger than the signal wavelength. Instead, these models reflect any amplitude fluctuation in the specular waves (e.g. say variations in the propagation condition or fast moving scatterers) that takes place over the time period of interest.
Therefore, considering the amplitudes of the specular components to be modulated by a Nakagami-\emph{m} random variable with squared unit mean as in \cite{Abdi2003,Paris2014}, we can write:
\begin{equation}
\label{eq:02}
V_r=\sum\limits_{n = 1}^N  \sqrt{\zeta} V_n \exp \left( {j\phi _n } \right) + X + jY,
\end{equation}
where $\zeta$ is a unit-mean Gamma distributed random variable with PDF
\begin{equation}
\label{eq:03}
f_\zeta  \left( u \right) = \frac{{m^m u^{m - 1} }}
{{\Gamma \left( m \right)}}e^{ - mu} .
\end{equation}
Note that we are considering the same fluctuation for the specular components, which is actually a natural situation in different wireless scenarios. When the scatterers are in the vicinity of the transmitter and/or the receiver,  the specular components will travel alongside most of the way, and the eventual channel fluctuations would affect  them simultaneously. This is, for example, the case of the human body shadowing as the user moves. Also, there is a number of causes of electromagnetic disturbances that will typically affect the speculat components simultaneously, including ionospheric scintillation (for satellite communications), sudden changes of the channel electromagnetic field  due to natural (e.g. solar activity) or artificial (e.g. motors ignition, electric power generators) sources, etc.

The wireless channel model given in (\ref{eq:02})-(\ref{eq:03}) for the particular case when ${N=1}$ corresponds to the Rician shadowed fading model \cite{Abdi2003}. In the rest of this paper, we will consider the case when ${N=2}$ and will derive a statistical description of the resulting channel model. This model will be subsequently denoted as the  Fluctuating Two-Ray (FTR) model, in order to indicate the presence of two specular components with random phase for which their amplitude exhibits a random fluctuation.

\section{Statistical characterization of the FTR fading model}
\label{analysis}

Let us consider the complex baseband received signal, which can be written as
\begin{equation}
\label{eq:04}
V_r  = \sqrt \zeta V_1  \exp \left( {j\phi _1 } \right) + \sqrt \zeta V_2  \exp \left( {j\phi _2 } \right) + X + jY.
\end{equation}
This model is conveniently expressed in terms of the parameters $K$ and $\Delta$, defined as
\begin{equation}
\label{eq:05}
K = \frac{{V_1^2  + V_2^2 }}
{{2\sigma ^2 }},
\end{equation}
\begin{equation}
\label{eq:06}
\Delta  = \frac{{2V_1 V_2 }}
{{V_1^2  + V_2^2 }}.
\end{equation}
The $K$ parameter represents the ratio of the average power of the dominant components to the power
of the remaining diffuse multipath, just like the Rician $K$ parameter. On the other hand, $\Delta$ is a parameter ranging from 0 to 1 expressing how similar to each other are the average received powers of the specular components: when the magnitudes of the two specular components are equal, $\Delta=1$ , while in the absence of a second component
($V_1=0$ or $V_2=0$), $\Delta=0$. Note that $\Delta=0$ yields the Rician shadowed fading model.

We will first characterize the distribution of the received power envelope associated with the FTR fading model, or equivalently, the distribution of the received signal-to-noise ratio (SNR). After passing through the multipath fading channel, the signal will be affected by additive white Gaussian noise (AWGN) with one-sided power spectral density $N_0$. The statistical characterization of the instantaneous SNR, here denoted as $\gamma$, is crucial for the analysis and design of wireless communications systems, as many performance metrics in wireless communications are a function of the SNR.

The received average SNR $\bar\gamma$ after transmitting a symbol with energy density $E_s$ undergoing a multipath fading channel as described in (\ref{eq:04}) will be
\begin{equation}
\begin{split}
\label{eq:07}
  & \bar \gamma  = \left( {E_b /N_0 } \right){\mathbb E}\left\{ {\left| {V_r } \right|^2 } \right\} = \left( {E_b /N_0 } \right)\left( {V_1^2  + V_2^2  + 2\sigma ^2 } \right)  \cr 
  & \quad \quad  = \left( {E_b /N_0 } \right)2\sigma ^2 \left( {1 + K} \right), \cr
\end{split}
\end{equation}
where $\mathbb E\{\cdot\}$ denotes the expectation operator.

With all the above definitions, the chief probability functions related to the FTR fading model can now be computed.

\subsection{MGF}
In the following lemma we show that, for the FTR fading model, it is possible to obtain the MGF of $\gamma$ in closed-form.

\begin{lemma}  
Let us consider the FTR fading model as described in (\ref{eq:04})-(\ref{eq:07}). Then, the MGF of the received SNR $\gamma$ will be given by
\begin{equation}
\label{eq:08}
\begin{split}
  & M_\gamma  \left( s \right) = \frac{{m^m \left( {1 + K} \right)\left( {1 + K - \bar \gamma s} \right)^{m - 1} }}
{{\left( {\sqrt {\mathcal{R}\left( {m,k,\Delta ;s} \right)} } \right)^m }}  \cr 
  & \quad \quad \quad \quad \times P_{m - 1} \left( {\frac{{m\left( {1 + K} \right) - \left( {m + K} \right)\bar \gamma s}}
{{\sqrt {\mathcal{R}\left( {m,k,\Delta ;s} \right)} }}} \right), \cr
\end{split}
\end{equation}
where $\mathcal{R}\left( {m,k,\Delta ;s} \right)$ is a polynomial in $s$ defined as

\begin{equation}
\label{eq:09}
\begin{split}
  & R\left( {m,k,\Delta ;s} \right) = \left[ {\left( {m + K} \right)^2  - \Delta ^2 K^2 } \right]\bar \gamma ^2 s^2   \cr 
  & \quad \quad \quad  \quad  - 2m\left( {1 + K} \right)\left( {m + K} \right)\bar \gamma s + m^2 \left( {1 + K} \right)^2,  \cr 
\end{split}
\end{equation}
and  $P_\mu (\cdot$) is the Legendre function of the first kind of degree $\mu$, which can be calculated as
\begin{equation}
\label{eq:10}
P_\mu  \left( z \right) = _2 F_1 \left( { - \mu ,\mu  + 1;1;\frac{{1 - z}}
{2}} \right),
\end{equation}
given that
\begin{equation} \label{eq:11}
\begin{split}
\left| {1 - z} \right| < 2,
\end{split}\end{equation}
where $_2F_1 (\cdot)$ is the Gauss hypergeometric function \cite[p. 556 (15.1.1)]{Abramowitz72}.

\end{lemma}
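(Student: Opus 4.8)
The plan is to peel off the three independent sources of randomness in $V_r$ one at a time by iterated expectation, conditioning first on the fluctuation $\zeta$ and the phases $\phi_1,\phi_2$. Conditioned on $(\zeta,\phi_1,\phi_2)$, the received voltage in (\ref{eq:04}) is a complex Gaussian variable whose in-phase and quadrature parts have variance $\sigma^2$ and whose mean has squared modulus $|\mu|^2=\zeta\bigl(V_1^2+V_2^2+2V_1V_2\cos(\phi_1-\phi_2)\bigr)$. Hence $|V_r|^2/\sigma^2$ is, conditionally, a noncentral chi-square variable with two degrees of freedom and noncentrality parameter $\lambda=|\mu|^2/\sigma^2$, so its MGF is the classical closed form $(1-2t)^{-1}\exp\!\bigl(\lambda t/(1-2t)\bigr)$. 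Substituting $\gamma=(E_b/N_0)|V_r|^2$, using $\bar\gamma=2\sigma^2(E_b/N_0)(1+K)$ from (\ref{eq:07}) to eliminate $(E_b/N_0)\sigma^2$, and rewriting $\lambda=2\zeta K(1+\Delta\cos\theta)$ with $\theta=\phi_1-\phi_2$ via the definitions (\ref{eq:05})--(\ref{eq:06}), the conditional MGF $\mathbb{E}[e^{s\gamma}\mid\zeta,\theta]$ collapses to a prefactor times an exponential that is \emph{linear} in $\zeta$.

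Next I would average over $\zeta$. Since $\zeta$ is unit-mean Gamma with PDF (\ref{eq:03}), its MGF is $\mathbb{E}[e^{a\zeta}]=(1-a/m)^{-m}$, so this step is immediate, and collecting terms over a common denominator gives
$\mathbb{E}[e^{s\gamma}\mid\theta]=\dfrac{m^m(1+K)(1+K-\bar\gamma s)^{m-1}}{\bigl(A-B\cos\theta\bigr)^{m}}$,
with $A=m(1+K)-(m+K)\bar\gamma s$ and $B=K\Delta\bar\gamma s$. The remaining expectation is over $\theta$; because $\phi_1,\phi_2$ are i.i.d. uniform on $[0,2\pi)$, $\cos\theta$ has the same law as $\cos\phi$ for a single uniform $\phi$, so $M_\gamma(s)$ reduces to evaluating the single trigonometric integral $\frac{1}{2\pi}\int_0^{2\pi}(A-B\cos\theta)^{-m}\,d\theta$.

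The crux of the argument is recognizing this integral as a Legendre function. I would invoke the Laplace-type integral representation $P_\nu(z)=\frac{1}{\pi}\int_0^\pi\bigl(z+\sqrt{z^2-1}\cos\phi\bigr)^{\nu}\,d\phi$ together with the reflection identity $P_{-m}=P_{m-1}$; matching $A+B\cos\theta=\sqrt{A^2-B^2}\bigl(z+\sqrt{z^2-1}\cos\theta\bigr)$ forces $z=A/\sqrt{A^2-B^2}$, whence $\frac{1}{2\pi}\int_0^{2\pi}(A-B\cos\theta)^{-m}\,d\theta=(A^2-B^2)^{-m/2}P_{m-1}\!\bigl(A/\sqrt{A^2-B^2}\bigr)$. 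A direct expansion then identifies $A^2-B^2$ with the polynomial $\mathcal{R}(m,k,\Delta;s)$ of (\ref{eq:09}) and $A/\sqrt{A^2-B^2}$ with the argument of $P_{m-1}$ in (\ref{eq:08}), completing the derivation; the hypergeometric form (\ref{eq:10}) with its domain (\ref{eq:11}) is simply the standard ${}_2F_1$ representation of $P_\mu$.

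I expect the main obstacles to be twofold. First, the bookkeeping needed to confirm the Laplace representation applies with the correct branch of $\sqrt{A^2-B^2}$ and the correct order via $P_{-m}=P_{m-1}$ — this is where a sign or index slip would propagate into the final formula. Second, convergence: all the intermediate expectations and the trigonometric integral must be finite. I would therefore first establish the identity for $s$ restricted to a real neighborhood of the origin on which $1-2(E_b/N_0)\sigma^2 s>0$, $A-B\cos\theta>0$ for all $\theta$, and the Gamma MGF integral converges, and then extend to the stated closed form by analytic continuation in $s$.
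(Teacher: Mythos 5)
Your derivation is correct, and it reaches (\ref{eq:08})--(\ref{eq:09}) by a genuinely different route than the paper. The paper conditions only on the fluctuation $\zeta=u$, observes that the conditional model is exactly TWDP with $K_u=uK$, $\Delta_u=\Delta$, imports the closed-form TWDP MGF of \cite{Rao2015} (an exponential times $I_0$), and then performs the single Gamma average using the tabulated Laplace-type integral $\int_0^\infty t^{\mu}e^{-\beta t}I_0(\alpha t)\,dt=\Gamma(\mu+1)\theta^{-\mu-1}P_\mu(\beta/\theta)$, which is where the Legendre function enters. You instead condition on both $\zeta$ and the phase difference $\theta$, use the noncentral chi-square MGF, average over $\zeta$ first (an elementary Gamma-MGF step yielding the $(A-B\cos\theta)^{-m}$ kernel with $A=m(1+K)-(m+K)\bar\gamma s$, $B=K\Delta\bar\gamma s$), and only then average over $\theta$, recognizing the trigonometric integral through the Laplace--Mehler representation of $P_\nu$ together with $P_{-m}=P_{m-1}$; your identification $A^2-B^2=\mathcal{R}(m,K,\Delta;s)$ and $z=A/\sqrt{\mathcal{R}}$ indeed reproduces (\ref{eq:08}). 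The two arguments are essentially Fubini-swapped versions of one another: the $I_0$ in the TWDP MGF is itself the phase average your last step performs, and the Erd\'elyi table entry the paper invokes is the $\zeta$-average you do in closed form. What your route buys is self-containedness (no reliance on the TWDP MGF of \cite{Rao2015} or on the integral table) at the cost of handling the Legendre representation and its branch/index bookkeeping yourself; the paper's route is shorter precisely because it reuses those two known results. Your closing remarks on validity (establishing the identity for real $s$ near the origin where all expectations converge, then extending by analytic continuation) are the right way to handle the convergence caveat, which the paper leaves implicit.
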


\begin{proof} See Appendix \ref{App1}.
\end{proof}

The FTR fading model introduced here is well-suited to recreate the propagation conditions in a wide variety of wireless scenarios, ranging from very favorable ones to worse-than Rayleigh fading. It also includes many important well-known statistical fading models as particular cases, i.e., TWDP, Rician shadowed, Rician, Rayleigh, one-sided Gaussian, Nakagami-\emph{m} and Nakagami-\emph{q} (Hoyt). The connection between the FTR fading model and the special cases included therein can easily be validated using the previous definitions for $K$, $\Delta$ and $m$, and is formally stated in Table \ref{table_1}.

\begin{table}[!t]
\renewcommand{\arraystretch}{1.7}
\centering
\caption{\textsc{Connections between the FTR fading and other fading models in the literature. The FTR fading parameters are underlined to avoid confusion with the special cases}.}
\label{table_1}
\centering
\begin{tabular}
{c|c}
\hline
\hline
Channels  & FTR Fading Parameters\\
\hline
\hline
One-sided Gaussian &  a) \b{$\Delta$}~$=0$, \b{$K$}~$\rightarrow\infty$, \b{$m$}~$= 0.5$ \\ & b)  \b{$\Delta$}~$=1$, \b{$K$}~$\rightarrow\infty$, \b{$m$}~$= 1$ \\
\hline
\multirow{2}{*}{Rayleigh} &  a) \b{$\Delta$}~$=0$, \b{$K$}~$\rightarrow\infty$, \b{$m$}~$= 1$ \\ & b)  \b{$\Delta$}~$=0$, \b{$K$}~$=0$, $\forall$\b{$m$}~$$ \\
\hline
\multirow{2}{*}{Nakagami-$q$ (Hoyt)} &  a) \b{$\Delta$}~$=0$, \b{$K$}~$=\frac{1-q^2}{2q^2}$, \b{$m$}~$= 0.5$ \\ & b)  $\forall$\{{$\Delta$},{$K$}\}, \text{with} $q=\sqrt{\frac{1+K(1-\Delta)}{1+K(1+\Delta)}}$, \b{$m$}~$=1$ \\
\hline
Nakagami-$m$ &  \b{$\Delta$}~$=0$, \b{$K$}~$\rightarrow\infty$, \b{$m$}~$= m$\\  
\hline
Rician &  \b{$\Delta$}~$=0$, \b{$K$}~$=K$, \b{$m$}~$\rightarrow\infty$\\  
\hline
Rician shadowed &  \b{$\Delta$}~$=0$, \b{$K$}~$=K$, \b{$m$}~$= m$\\  
\hline
TWDP &  \b{$\Delta$}~$=\Delta$, \b{$K$}~$=K$, \b{$m$}~$\rightarrow\infty$\\  
\hline
Two-Wave &  \b{$\Delta$}~$=\Delta$, \b{$K$}~$\rightarrow\infty$, \b{$m$}~$\rightarrow\infty$\\  
\hline
Fluctuating Two-Wave &  \b{$\Delta$}~$=\Delta$, \b{$K$}~$\rightarrow\infty$, \b{$m$}~$=m$\\  
\hline
\hline
\end{tabular}
\end{table}

Special attention is merited for the case of Nakagami-$q$ (Hoyt) fading, which can be seen as a special case of the FTR fading model in two different ways. The first one arises after specializing the Rician shadowed model for $m=1/2$ as indicated in \cite{Laureano2015}; however, as we will later see, choosing the parameter $m$ to be a positive integer has additional benefits in terms of mathematical tractability. Thus, in the following corollary we show how the Nakagami-$q$ (Hoyt) fading model can be obtained from the FTR fading model with $m=1$.

\begin{corollary} \label{C1}
For $m=1$, the FTR fading model becomes the Nakagami-$q$ (Hoyt) model with
\begin{equation}
\label{eq:28}
q = \sqrt {\frac{{1 + K\left( {1 - \Delta } \right)}}
{{1 + K\left( {1 + \Delta } \right)}}} .
\end{equation}
\end{corollary}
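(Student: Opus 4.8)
The plan is to obtain the corollary directly from the MGF in Lemma~1, equation~(\ref{eq:08}), since setting $m=1$ collapses both the Legendre factor and the power term, leaving an expression that is recognizable, after a reparametrization, as the MGF of the instantaneous SNR under Nakagami-$q$ (Hoyt) fading. Since the MGF determines the distribution uniquely, matching the two MGFs establishes the identity of the two models.

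First I would substitute $m=1$ into (\ref{eq:08})--(\ref{eq:09}). Then $m^m=1$, the factor $(1+K-\bar\gamma s)^{m-1}$ equals unity, and the Legendre function degenerates to $P_0(\cdot)=1$, since by (\ref{eq:10}) we have $P_0(z)={}_2F_1(0,1;1;(1-z)/2)=1$. This leaves
\begin{equation}
M_\gamma(s)=\frac{1+K}{\sqrt{\big[(1+K)^2-\Delta^2K^2\big]\bar\gamma^2s^2-2(1+K)^2\bar\gamma s+(1+K)^2}}.
\end{equation}
Dividing numerator and denominator inside the radical by $(1+K)^2$ yields the equivalent form $M_\gamma(s)=\big[\,1-2\bar\gamma s+\tfrac{(1+K)^2-\Delta^2K^2}{(1+K)^2}\,\bar\gamma^2 s^2\,\big]^{-1/2}$; as a consistency check, for $\Delta=0$ this reduces to $(1-\bar\gamma s)^{-1}$, the Rayleigh case, in agreement with Table~\ref{table_1}.

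Next I would compare this with the known MGF of the SNR under Hoyt fading, $M_\gamma(s)=\big[\,1-2\bar\gamma s+\tfrac{4q^2}{(1+q^2)^2}\,\bar\gamma^2 s^2\,\big]^{-1/2}$ (obtained, e.g., as the product of two chi-square MGFs with unequal variances, or from standard references). The coefficient of $s$ already matches, so it only remains to verify the algebraic identity $\tfrac{4q^2}{(1+q^2)^2}=\tfrac{(1+K)^2-\Delta^2K^2}{(1+K)^2}$ for $q$ given by (\ref{eq:28}). Using $1+q^2=\tfrac{2(1+K)}{1+K(1+\Delta)}$ together with the factorization $(1+K)^2-\Delta^2K^2=\big(1+K(1-\Delta)\big)\big(1+K(1+\Delta)\big)$, this identity follows after a one-line simplification, completing the proof.

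There is no genuine obstacle here; the argument is a substitution plus an elementary identity. The only points meriting a little care are (i) writing both MGFs in a normalized form (unit coefficient of $s$) so that only the quadratic coefficients need be matched, thereby avoiding any ambiguity in how the Hoyt parameter is defined, and (ii) checking that the $q$ produced by (\ref{eq:28}) indeed lies in $(0,1]$ for all admissible $\Delta\in[0,1]$ and $K\ge 0$ — which it does, since the numerator of $q^2$ never exceeds its denominator — so that the identification is with a genuine Hoyt distribution, with $q=1$ (Rayleigh) recovered at $\Delta=0$.
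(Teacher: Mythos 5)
Your proof is correct and takes essentially the same route as the paper's Appendix~\ref{App2}: set $m=1$ in the MGF of Lemma~1 so that the Legendre factor degenerates to $P_0\equiv 1$, pull $(1+K)^2$ out of the quadratic $\mathcal{R}(1,K,\Delta;s)$, and match the result against the Nakagami-$q$ (Hoyt) SNR MGF, which yields (\ref{eq:28}). The only cosmetic difference is that you verify that the stated $q$ satisfies the coefficient identity (via the factorization $(1+K)^2-\Delta^2K^2=\bigl(1+K(1-\Delta)\bigr)\bigl(1+K(1+\Delta)\bigr)$), whereas the paper equates the coefficients and solves for $q$.
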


\begin{proof}
See Appendix \ref{App2}.
\end{proof}

\begin{figure}[t]
    \includegraphics[width=1\columnwidth]{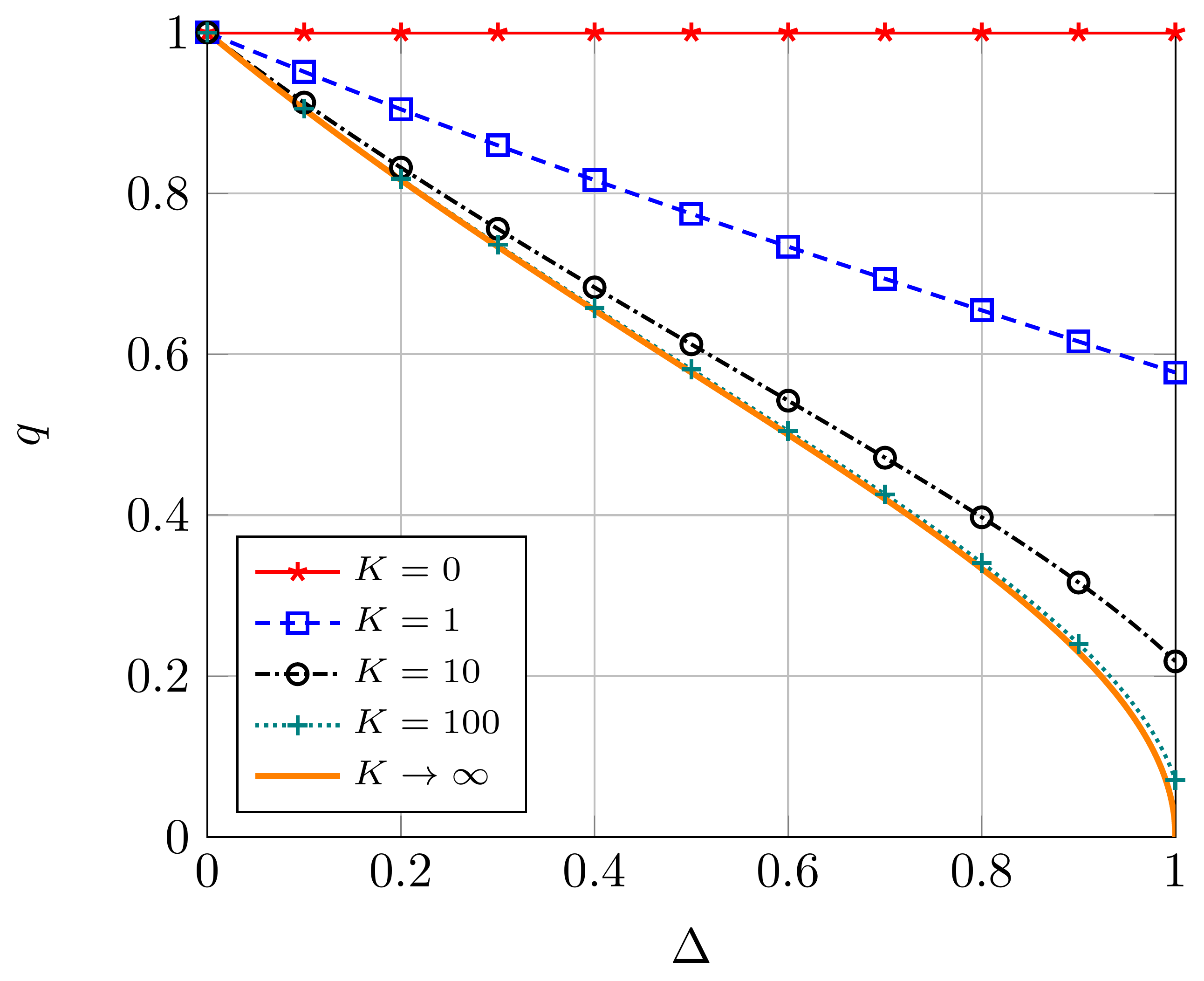}
          \caption{Connection between the FTR and the Nakagami-$q$ fading model parameters, with $m=1$.}
    \label{f0}
  \end{figure}
  
Strikingly, the inherently non-circularly symmetric Hoyt distribution is obtained by adding two specular components with uniformly distributed phases and Rayleigh-distributed random amplitude ($m=1$) to a circularly symmetric diffuse component, for all $q$ satisfying (\ref{eq:28}). Setting $q=0$ or $q=1$ reduces to the one-sided Gaussian and Rayleigh distributions, respectively. Note that the Nakagami-$q$ fading distribution models scenarios worse than Rayleigh (deeper fades). The relationship between $\Delta$, $K$ and $q$ is represented in Fig. \ref{f0}. We see that for low values of $K$, only those values of $q$ closer to 1 are possible for any $\Delta$. As $K$ grows, we observe that the whole range of $q\in[0,1]$ can be attained with $q\rightarrow\sqrt{\frac{1-\Delta}{1+\Delta}}$.

With the MGF in closed-form, we now show that the PDF and CDF of the FTR fading distribution can also be obtained in closed-form, provided that the parameter $m$ is restricted to take positive integer values (i.e., $m \in \mathbb{Z}^{+}$).

\subsection{PDF and CDF}

When the parameter $m$ takes integer values, the MGF of the SNR in the FTR fading model can be calculated as a finite sum of elementary terms. This is based on the fact that, for $m$ an integer, the Legendre function in the MGF given in (\ref{eq:08}) has an integer degree, thus becoming a Legendre polynomial. A Legendre polynomial of degree $n$ can be written as \cite[p. 775 (22.3.8)]{Abramowitz72}
\begin{equation}
\label{eq:34}
P_n \left( z \right) = \frac{1}
{{2^n }}\sum\limits_{q = 0}^{\left\lfloor {n/2} \right\rfloor } {\left( { - 1} \right)^q } C_q^n z^{n - 2q},
\end{equation}
where $\left\lfloor  \cdot  \right\rfloor $ is the floor function and $C_q^n$ is a coefficient given by
\begin{equation}
\label{eq:35}
C_q^n  = \left( {\begin{array}{c}
   n  \\ 
   q  \\ 
 \end{array} } \right)\left( {\begin{array}{c}
   {2n - 2q}  \\ 
   n  \\ 
 \end{array} } \right) = \frac{{\left( {2n - 2q} \right)!}}
{{q!\left( {n - q} \right)!\left( {n - 2q} \right)!}}.
\end{equation}

\begin{figure*}[!t]
\normalsize
\begin{equation}
\label{eq:36}
\begin{split}
  & f_\gamma  \left( x \right) = \frac{{ 1}}
{{2^{m - 1} }}\frac{{1 + K}}
{{\bar \gamma }}\left( {\frac{m}
{{\sqrt {\left( {m + K} \right)^2  - K^2 \Delta ^2 } }}} \right)^m \sum\limits_{q = 0}^{\left\lfloor {(m - 1)/2} \right\rfloor } {\left( { - 1} \right)^q } C_q^{m - 1} \left( {\frac{{m + K}}
{{\sqrt {\left( {m + K} \right)^2  - K^2 \Delta ^2 } }}} \right)^{m - 1 - 2q}   \cr 
  & \quad \quad \quad  \times \;\Phi _2^{(4)} \left( {1 + 2q - m,m - q - \frac{1}
{2},m - q - \frac{1}
{2},1 - m;1;} \right.  \cr 
  & \quad \quad \quad \quad \quad \quad \quad \quad \quad \left. { - \frac{{m\left( {1 + K} \right)}}
{{\left( {m + K} \right)\bar \gamma }}x, - \frac{{m\left( {1 + K} \right)}}
{{\left( {m + K\left( {1 + \Delta } \right)} \right)\bar \gamma }}x, - \frac{{m\left( {1 + K} \right)}}
{{\left( {m + K\left( {1 - \Delta } \right)} \right)\bar \gamma }}x, - \frac{{1 + K}}
{{\bar \gamma }}x} \right). \cr
\end{split}
\end{equation}
\vspace*{4pt}
\end{figure*}

\begin{figure*}[!t]
\normalsize
\begin{equation}
\label{eq:37}
\begin{split}
  & F_\gamma  \left( x \right) = \frac{{ 1}}
{{2^{m - 1} }}\frac{{1 + K}}
{{\bar \gamma }}\left( {\frac{m}
{{\sqrt {\left( {m + K} \right)^2  - K^2 \Delta ^2 } }}} \right)^m \sum\limits_{q = 0}^{\left\lfloor {(m - 1)/2} \right\rfloor } {\left( { - 1} \right)^q } C_q^{m - 1} \left( {\frac{{m + K}}
{{\sqrt {\left( {m + K} \right)^2  - K^2 \Delta ^2 } }}} \right)^{m - 1 - 2q}   \cr 
  & \quad \quad \quad  \times x \;\Phi _2^{(4)} \left( {1 + 2q - m,m - q - \frac{1}
{2},m - q - \frac{1}
{2},1 - m;2;} \right.  \cr 
  & \quad \quad \quad \quad \quad \quad \quad \quad \quad \left. { - \frac{{m\left( {1 + K} \right)}}
{{\left( {m + K} \right)\bar \gamma }}x, - \frac{{m\left( {1 + K} \right)}}
{{\left( {m + K\left( {1 + \Delta } \right)} \right)\bar \gamma }}x, - \frac{{m\left( {1 + K} \right)}}
{{\left( {m + K\left( {1 - \Delta } \right)} \right)\bar \gamma }}x, - \frac{{1 + K}}
{{\bar \gamma }}x} \right). \cr
\end{split}
\end{equation}
\hrulefill
\vspace*{4pt}
\end{figure*}

\begin{figure*}[!t]
\normalsize
\begin{align}
\label{eq:36b}
\widehat f_{\gamma}(x)&\approx \sum_{i=1}^{M}\frac{\alpha_i}{2}\left\{\mathcal{G}_m\left(x;\beta,K(1-\delta_i)\right)+\mathcal{G}_m\left(x;\beta,K(1+\delta_i)\right)\right\},\\
\label{eq:36c}
\mathcal{G}_m\left(x;\beta,K\right)&=\left(\frac{m}{K+m}\right)^m \beta e^{-\beta x\frac{m}{K+m}} \sum_{n=0}^{m-1}\binom{m-1}{n}\left(\frac{K \beta x}{K+m}\right)^n\frac{1}{n!};\hspace{5mm}
%
%
\end{align}
\begin{align}
\label{eq:37b}
\widehat F_{\gamma}(x)&\approx 1- \sum_{i=1}^{M}\frac{\alpha_i}{2} \left\{\mathcal{H}_m\left(x;\beta,K(1-\delta_i)\right)+\mathcal{H}_m\left(x;\beta,K(1+\delta_i)\right)\right\}\\
\label{eq:37c}
\mathcal{H}_m\left(x;\beta,K\right)&=\sum_{n=0}^{m-1}\sum_{j=0}^{m-n-1}\frac{m^{m-n-1}K^{n+j}}{(m+K)^{m-1+j}}\beta^j\frac{(m-n-j)_{n+j}}{(n+j)! j!}e^{-\beta\frac{m}{m+K}x} x^j
\end{align}
\hrulefill
\hrulefill
\vspace*{4pt}
\end{figure*}

We will make use of (\ref{eq:34}) to compute closed-form expressions for the PDF and CDF of the power envelope for the FTR fading model in (\ref{eq:36}) and (\ref{eq:37}), respectively, which will be demonstrated in the next lemma. Note that the PDF and CDF of the received signal envelope can be easily derived from (\ref{eq:36}) and (\ref{eq:37}) by a simple change of variables. Specifically, through a change of variables we get $f_r(r)=2r f_{\gamma}(r^2)$ and $F_r(r)=F_{\gamma}(r^2)$, with $\bar\gamma$ in (\ref{eq:36}) and (\ref{eq:37}) replaced by $\Omega=E\{r^2\}$. 

\begin{lemma}
When $m\in\mathbb{Z}^+$, the PDF and CDF of the SNR $\gamma$ in a FTR fading channel can be expressed in terms of the confluent hypergeometric function $\Phi_2(\cdot)$
defined in \cite[p. 34, (8)]{Srivastava1985}, as given, respectively, in (\ref{eq:36}) and (\ref{eq:37}).
\end{lemma}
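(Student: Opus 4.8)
The plan is to obtain the PDF and CDF by inverting the MGF in (\ref{eq:08}) term by term, exploiting the fact that for $m\in\mathbb{Z}^+$ the Legendre function $P_{m-1}(\cdot)$ collapses to the Legendre polynomial given in (\ref{eq:34})--(\ref{eq:35}). First I would substitute (\ref{eq:34}) into (\ref{eq:08}), writing the argument of $P_{m-1}$ as $z_s = \bigl[m(1+K)-(m+K)\bar\gamma s\bigr]/\sqrt{\mathcal{R}(m,K,\Delta;s)}$ and $z_s^{m-1-2q}$ as a power of that ratio; this turns $M_\gamma(s)$ into a finite sum over $q$ of terms of the form $c_q\,(1+K-\bar\gamma s)^{m-1}\bigl[m(1+K)-(m+K)\bar\gamma s\bigr]^{m-1-2q}/\mathcal{R}(m,K,\Delta;s)^{m-q-1/2}$, where $c_q$ absorbs the constants and the coefficient $C_q^{m-1}$. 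The key algebraic observation is that the quadratic $\mathcal{R}$ factors over $s$: from (\ref{eq:09}), $[(m+K)^2-\Delta^2K^2]=(m+K(1+\Delta))(m+K(1-\Delta))$, so $\mathcal{R}(m,K,\Delta;s)=\bigl(m(1+K)-(m+K(1+\Delta))\bar\gamma s\bigr)\bigl(m(1+K)-(m+K(1-\Delta))\bar\gamma s\bigr)$. Hence each summand is a product of four linear factors in $s$ raised to (possibly negative, half-integer) powers: $(1+K-\bar\gamma s)^{m-1}$, $(m(1+K)-(m+K)\bar\gamma s)^{m-1-2q}$, and the two factors of $\mathcal{R}$ each to the power $-(m-q-1/2)$.

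Next I would recognize the Laplace-transform pair that matches a product of linear factors raised to arbitrary powers: the confluent hypergeometric function of four variables $\Phi_2^{(4)}$ (the Lauricella-type extension of $\Phi_2$ in \cite[p.\ 34, (8)]{Srivastava1985}) has, as its Laplace transform, exactly a product of the form $s^{-\rho}\prod_{k=1}^{4}(1-a_k/s)^{-b_k}$ up to normalization --- more precisely, $\mathcal{L}\{x^{\rho-1}\Phi_2^{(4)}(b_1,\dots,b_4;\rho;a_1 x,\dots,a_4 x)\}(s)=\Gamma(\rho)s^{-\rho}\prod_k(1-a_k/s)^{-b_k}$. I would normalize each summand into this canonical shape by pulling out the leading $s$-power so that the bracketed factors become $(1-a_k/s)^{b_k}$, read off the four $a_k$ (which are $-m(1+K)/((m+K)\bar\gamma)$, $-m(1+K)/((m+K(1+\Delta))\bar\gamma)$, $-m(1+K)/((m+K(1-\Delta))\bar\gamma)$, and $-(1+K)/\bar\gamma$ after the sign bookkeeping) and the four exponents $b_k$ (which become $1-m$, $m-q-\tfrac12$, $m-q-\tfrac12$, $1+2q-m$), and identify $\rho$: for the PDF $\rho=1$ and for the CDF $\rho=2$ (since integrating the PDF corresponds to dividing the MGF by $s$, which raises $\rho$ by one and produces the extra factor $x$ in (\ref{eq:37})). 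Matching the multiplicative constants --- the $1/2^{m-1}$, the $(1+K)/\bar\gamma$, the $(m/\sqrt{(m+K)^2-K^2\Delta^2})^m$ prefactor, and the per-term $C_q^{m-1}(\,\cdot\,)^{m-1-2q}$ --- is a matter of collecting the $\bar\gamma$ and constant powers that were factored out when putting each $(\,\cdot\,)$ into the $(1-a_k/s)$ form.

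The main obstacle I anticipate is twofold. First, the bookkeeping of signs, $\bar\gamma$ powers, and the half-integer exponents when forcing each of the four linear factors into the exact $(1-a_k/s)^{b_k}$ normalization demanded by the $\Phi_2^{(4)}$ Laplace pair --- a single misplaced factor of $\bar\gamma$ or $(m+K)$ propagates through the whole prefactor, so I would track one generic summand symbolically and only at the end verify the constant. Second, and more delicate, is justifying the term-by-term inverse Laplace transform and the validity region: the $\Phi_2$-type series converges everywhere (it is entire, having only the single Pochhammer denominator $(\rho)_n$-type growth in each variable), so the resulting PDF/CDF expressions are globally valid, but one should check that the region $|1-z|<2$ required in (\ref{eq:11}) for the Legendre representation, together with $\Re(s)$ large enough for the MGF to exist, is consistent with the strip in which the inversion contour lies; since the four $a_k$ are all negative (the MGF is analytic for $\Re(s)$ to the right of them), the Bromwich contour can be taken in the right half-plane and the inversion is legitimate. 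A sanity check I would run at the end is to set $\Delta=0$ (so the second and third factors coincide, two of the four variables of $\Phi_2^{(4)}$ merge) and confirm that (\ref{eq:36})--(\ref{eq:37}) reduce to the known Rician shadowed PDF/CDF, and to set $q=0$ in the sum for $m=1$ to recover the Hoyt result of Corollary~\ref{C1}.
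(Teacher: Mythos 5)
Your proposal is correct and follows essentially the same route as the paper's proof: factor $\mathcal{R}$ into the two linear terms, expand $P_{m-1}$ via (\ref{eq:34}), write the MGF as a finite sum of products of four binomial factors in $s$, and invert term by term with the standard Laplace pair for $\Phi_2^{(4)}$ (the CDF following from dividing the MGF by $s$, i.e.\ $\rho=2$). The only caveat is the final pairing of exponents with arguments ($1+2q-m$ goes with the $(m+K)$ factor and $1-m$ with the $(1+K)/\bar\gamma$ factor), which your own planned symbolic bookkeeping would resolve.
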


\begin{proof}
See Appendix \ref{App3}.
\end{proof}

Note that despite requiring the evaluation of a confluent hypergeometric function, the PDF and CDF of the FTR fading model can be expressed in terms of a well-known function in communication theory. In fact, the $\Phi_2$ function also makes an appearance in the CDF of common fading models such as Rician shadowed or $\kappa$-$\mu$ shadowed \cite{Paris2010,Paris2014}. Moreover, this function can be efficiently evaluated using an inverse Laplace transform as described in \cite[Appendix 9B]{AlouiniBook}. Thus, the evaluation of the FTR distribution functions does not pose any additional challenge compared to other state-of-the-art fading models.

In the following lemma, we also present a family of approximate PDFs and CDFs for the FTR fading model, which are given in terms of a finite sum of exponential functions and powers. Thus, its evaluation becomes as simple as evaluating the well-known Gamma distribution associated with the squared envelope in the Nakagami-$m$ fading model.

\begin{lemma}
When $m\in\mathbb{Z}^+$, the PDF and CDF of the SNR $\gamma$ in a FTR fading channel can be approximated by a finite sum of elementary functions, as given, in (\ref{eq:36b}) and (\ref{eq:37b}) respectively, where $M>\lceil K \Delta \rceil$, $\beta=\frac{K+1}{\bar\gamma}$ and the coefficients $\alpha_i$ and $\delta_i$ are defined in (\ref{coef1}) and (\ref{coef2}) in the Appendix \ref{App4}.
\end{lemma}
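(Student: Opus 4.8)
The plan is to start from the exact MGF in \eqref{eq:08}, which for $m\in\mathbb{Z}^+$ involves a Legendre \emph{polynomial} $P_{m-1}$, and to approximate the discrete-looking structure it induces by a mixture of Rician-shadowed–type MGFs whose inverse transforms are already known in closed form. The key observation is that, after writing $P_{m-1}$ via \eqref{eq:34}, the argument of the Legendre polynomial in \eqref{eq:08} has the form $u/\sqrt{\mathcal{R}}$ with $\mathcal{R}$ the quadratic in \eqref{eq:09}; completing the square in $\mathcal{R}$ shows that the dependence on $\Delta$ enters only through the combinations $m+K(1\pm\delta)$ for a continuum of values $\delta\in[0,1]$ — the same combinations that appear in the exact PDF/CDF \eqref{eq:36}–\eqref{eq:37}. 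I would therefore seek to express $M_\gamma(s)$ as $\int_0^1 w(\delta)\,\tfrac12\{M_{\mathrm{RS}}(s;\beta,K(1-\delta))+M_{\mathrm{RS}}(s;\beta,K(1+\delta))\}\,d\delta$, where $M_{\mathrm{RS}}$ is the Rician-shadowed MGF (the $\Delta=0$ specialization, which by Table~\ref{table_1} is a genuine FTR sub-case with a known rational-times-Gamma form) and $w(\delta)$ is a weight arising from the uniform phase distribution of the two specular rays. Concretely, conditioning on $\zeta$ and on the phase difference $\phi_1-\phi_2$, the two rays combine into a single effective specular component of squared magnitude proportional to $1+\Delta\cos\theta$, so averaging over $\theta\sim\mathcal U[0,2\pi)$ is exactly the integral representation sought, with $w$ the arcsine density.

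Next I would discretize this integral. Because the integrand is analytic in $\delta$ and the only ``hard'' part is the oscillatory factor tied to $K\Delta$, a Gauss–Chebyshev (or equivalently a truncated Fourier–Legendre) quadrature on $[0,1]$ with $M$ nodes $\delta_i$ and weights $\alpha_i$ reproduces the integral exactly for integrands that are polynomials of degree $<2M$ in the relevant variable; since the nonpolynomial behaviour is governed by the effective Rician parameter $K\delta$, taking $M>\lceil K\Delta\rceil$ guarantees that the residual error is controlled and decays super-algebraically. This is the origin of the stated condition $M>\lceil K\Delta\rceil$ and of the definitions of $\alpha_i,\delta_i$ in \eqref{coef1}–\eqref{coef2}. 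Applying this quadrature to the integral representation yields
\[
M_\gamma(s)\;\approx\;\sum_{i=1}^M \frac{\alpha_i}{2}\Bigl\{M_{\mathrm{RS}}\bigl(s;\beta,K(1-\delta_i)\bigr)+M_{\mathrm{RS}}\bigl(s;\beta,K(1+\delta_i)\bigr)\Bigr\}.
\]

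Finally I would invert term by term. The Rician-shadowed SNR with integer $m$ has an MGF of the form $(m/(m+K))^m\,\beta\,(\beta-s)^{\dots}$-type rational expression whose inverse Laplace transform is the finite sum appearing in $\mathcal{G}_m(x;\beta,K)$ in \eqref{eq:36c} — this is the standard Nakagami-$m$-plus-LOS density, obtainable either by partial fractions or by recognizing $(1+K x\,\text{stuff})$ expansions; integrating it gives the complementary-CDF kernel $\mathcal{H}_m(x;\beta,K)$ in \eqref{eq:37c} via repeated integration by parts (the double sum over $n,j$ and the Pochhammer $(m-n-j)_{n+j}$ are exactly what one gets). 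Linearity of the (inverse) Laplace transform then turns the MGF approximation into \eqref{eq:36b} and \eqref{eq:37b}, with $\beta=(K+1)/\bar\gamma$ as already fixed by \eqref{eq:07}. The main obstacle I anticipate is not the term-by-term inversion — that is routine — but establishing the integral (mixture) representation of $M_\gamma(s)$ rigorously and then certifying that the chosen quadrature is \emph{exact} (not merely convergent) on the polynomial part, so that the only error is the explicit, controllable tail; getting the precise form of $\alpha_i,\delta_i$ and the sharp threshold $M>\lceil K\Delta\rceil$ right is where the real work lies, and I would handle it by matching the Fourier–Legendre coefficients of $w(\delta)$ against the truncation order, deferring the bookkeeping to Appendix~\ref{App4}.
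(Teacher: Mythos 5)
Your overall architecture --- represent the FTR law as a mixture of Rician-shadowed kernels with effective parameters $K(1\pm\delta)$, discretize the mixing variable with $M$ nodes, and exploit the integer-$m$ closed form of each kernel --- is essentially the skeleton of the paper's argument: the paper starts from the exact $\theta$-integral form of the TWDP power PDF in (\ref{eq:app01}), takes Durgin's published finite mixture of Ricians (\ref{eq:app02}) with the coefficients (\ref{coef1})--(\ref{coef2}), averages each Rician kernel over the Gamma fluctuation (so that it becomes a Rician-shadowed kernel, as in \cite{Paris2014}), and reduces the resulting ${}_1F_1$ to the finite sum (\ref{eq:app07}), yielding $\mathcal{G}_m$ in (\ref{eq:36c}) and, by integration, $\mathcal{H}_m$ in (\ref{eq:37c}). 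Your Laplace-domain variant (mixture of Rician-shadowed MGFs, term-by-term inversion) is an equivalent ordering of the same steps, and your exact continuous-mixture representation over the phase difference (effective squared specular amplitude proportional to $1+\Delta\cos\theta$) is correct.

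The genuine gap is in the discretization step, which is precisely where the content of the lemma lies. The statement asserts the approximation with the \emph{specific} coefficients (\ref{coef1})--(\ref{coef2}): nodes $\delta_i=\Delta\cos\bigl((i-1)\pi/(2M-1)\bigr)$, i.e.\ equally spaced phase samples, together with interpolatory (Newton--Cotes-type) weights $\alpha_i$ obtained by integrating Lagrange basis polynomials; these are inherited verbatim from the TWDP approximation of \cite[eq.~(17)]{Durgin2002}. Your proposal instead invokes a Gauss--Chebyshev / Fourier--Legendre rule, whose nodes and weights are different (Gauss--Chebyshev weights are all equal, and its nodes are not $\cos((i-1)\pi/(2M-1))$), so the construction you sketch would not reproduce (\ref{coef1})--(\ref{coef2}); deferring ``the bookkeeping'' defers exactly the part that must be proved. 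Relatedly, your claims that the quadrature is \emph{exact} on the polynomial part and that $M>\lceil K\Delta\rceil$ certifies a super-algebraically decaying residual are neither established in your argument nor needed for (or proved in) the paper: there, $M>K\Delta$ is an empirical adequacy rule carried over from \cite{Durgin2002}. The remaining ingredients of your plan (Gamma averaging of the Rician kernel, integer-$m$ reduction, term-by-term inversion/integration to get $\mathcal{G}_m$ and $\mathcal{H}_m$) are sound and coincide with Appendix~\ref{App4}, modulo working with MGFs instead of PDFs.
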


\begin{proof}
See Appendix \ref{App4}.
\end{proof}

In the next set of figures (Figs. 2 to 7), we study the effect of the FTR fading model parameters $K$, $\Delta$ and $m$ on the shape of the PDF. Specifically, the received signal envelope PDF $f_r(r)$ and the power envelope $f_{\gamma}(\gamma)$ are represented in order to better illustrate the versatility of the FTR fading model. 
For the approximated results, $M=\lceil K \Delta \rceil+1$ has been considered in every case.
Monte Carlo simulations have been carried out in order to validate the depicted functions, but they are not represented in these figures as the simulated values are indistinguishable from the exact results.
Similarly to the TWDP fading model, the FTR fading model is inherently bimodal; such bimodality is dominated by the parameters $K$ and $\Delta$. Specifically, $\Delta\rightarrow 1$ and large values of $K$ yield a more pronounced bimodality; this corresponds to the worse-than-Rayleigh fading case. The additional parameter $m$ smoothens such bimodality\footnote{The bimodality of the distribution is clearly identified by the appearance of two maxima in its PDF; this would be translated into several transitions from concavity to convexity (i.e., inflection points) in the CDF.
} as $m$ decreases; conversely, as $m\to\infty$, the FTR fading model reduces to the TWDP fading model. 

{
\begin{figure}[ht]
    \includegraphics[width=0.99\columnwidth]{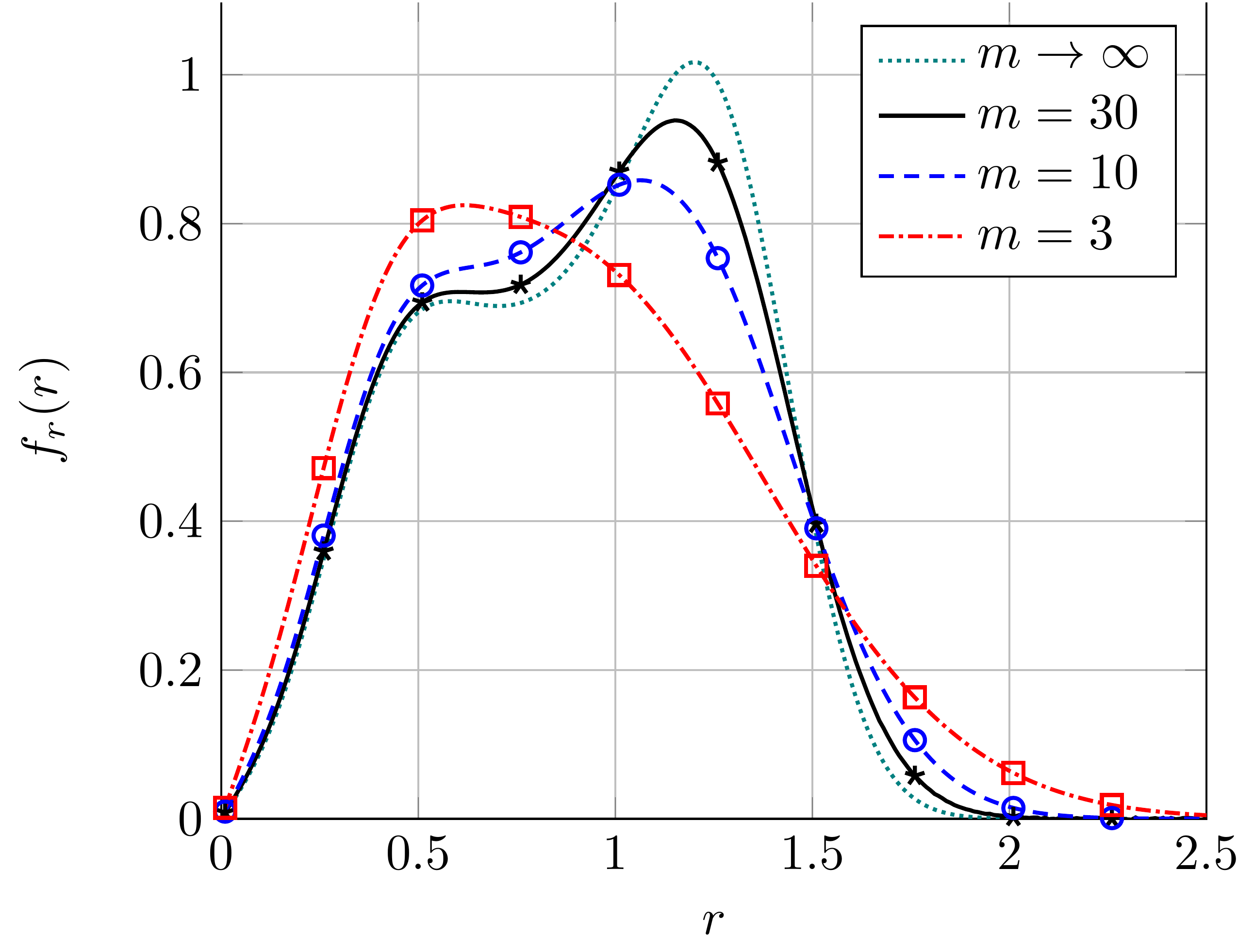}
          \caption{FTR signal envelope distribution for different values of $m$, with $K=15$, $\Delta=0.9$ and $\Omega=1$. Solid lines correspond to the exact PDF derived from (\ref{eq:36}), markers correspond to the approximate PDF derived from (\ref{eq:36b}). The case $m\rightarrow\infty$ reduces to the TWDP fading distribution \cite{Durgin2002}.}
    \label{fig:subfig1}
  \end{figure}

\begin{figure}[ht]
    \includegraphics[width=0.99\columnwidth]{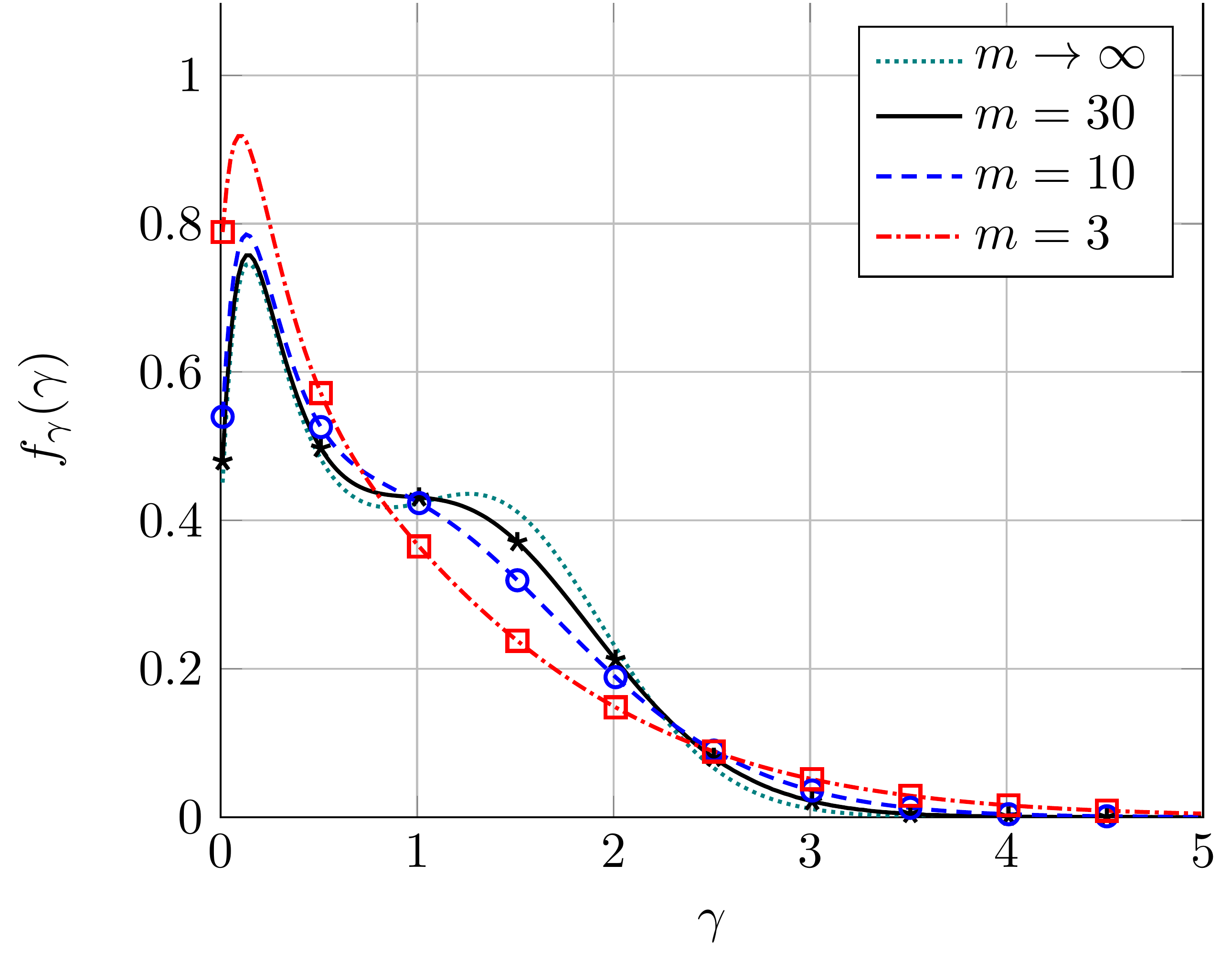}
          \caption{FTR power envelope distribution for different values of $m$, with $K=15$, $\Delta=0.9$ and $\bar\gamma=1$. Solid lines correspond to the exact PDF derived from (\ref{eq:36}), markers correspond to the approximate PDF derived from (\ref{eq:36b}). The case $m\rightarrow\infty$ reduces to the TWDP fading distribution \cite{Durgin2002}.}
    \label{fig:subfig4}
  \end{figure}
}

%

{
\begin{figure}[ht]
    \includegraphics[width=0.99\columnwidth]{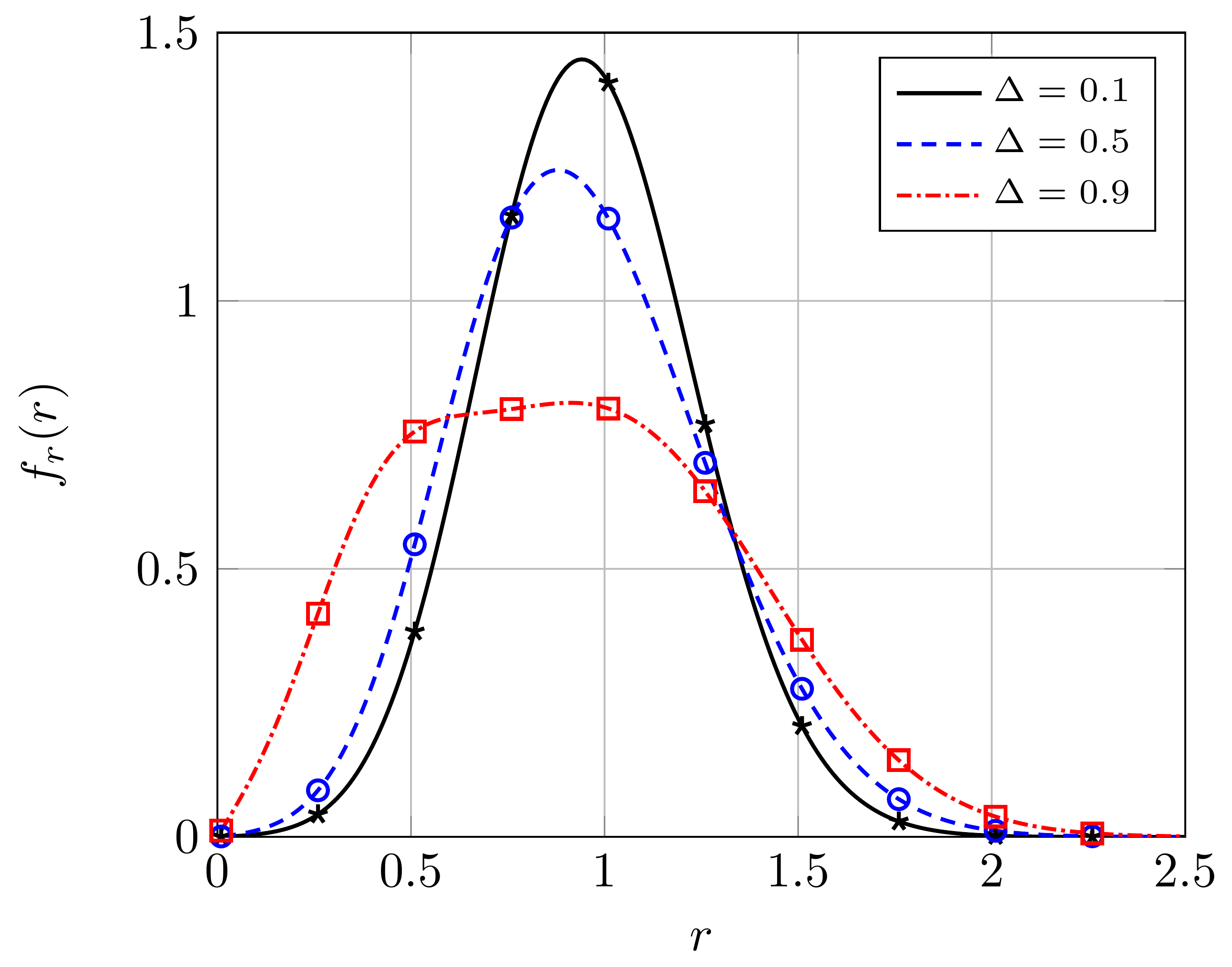}
          \caption{FTR signal envelope distribution for different values of $\Delta$, with $K=15$, $m=5$. Solid lines correspond to the exact PDF derived from (\ref{eq:36}), markers correspond to the approximate PDF derived from (\ref{eq:36b}).}
    \label{fig:subfig2}
  \end{figure}

\begin{figure}[ht]
    \includegraphics[width=0.99\columnwidth]{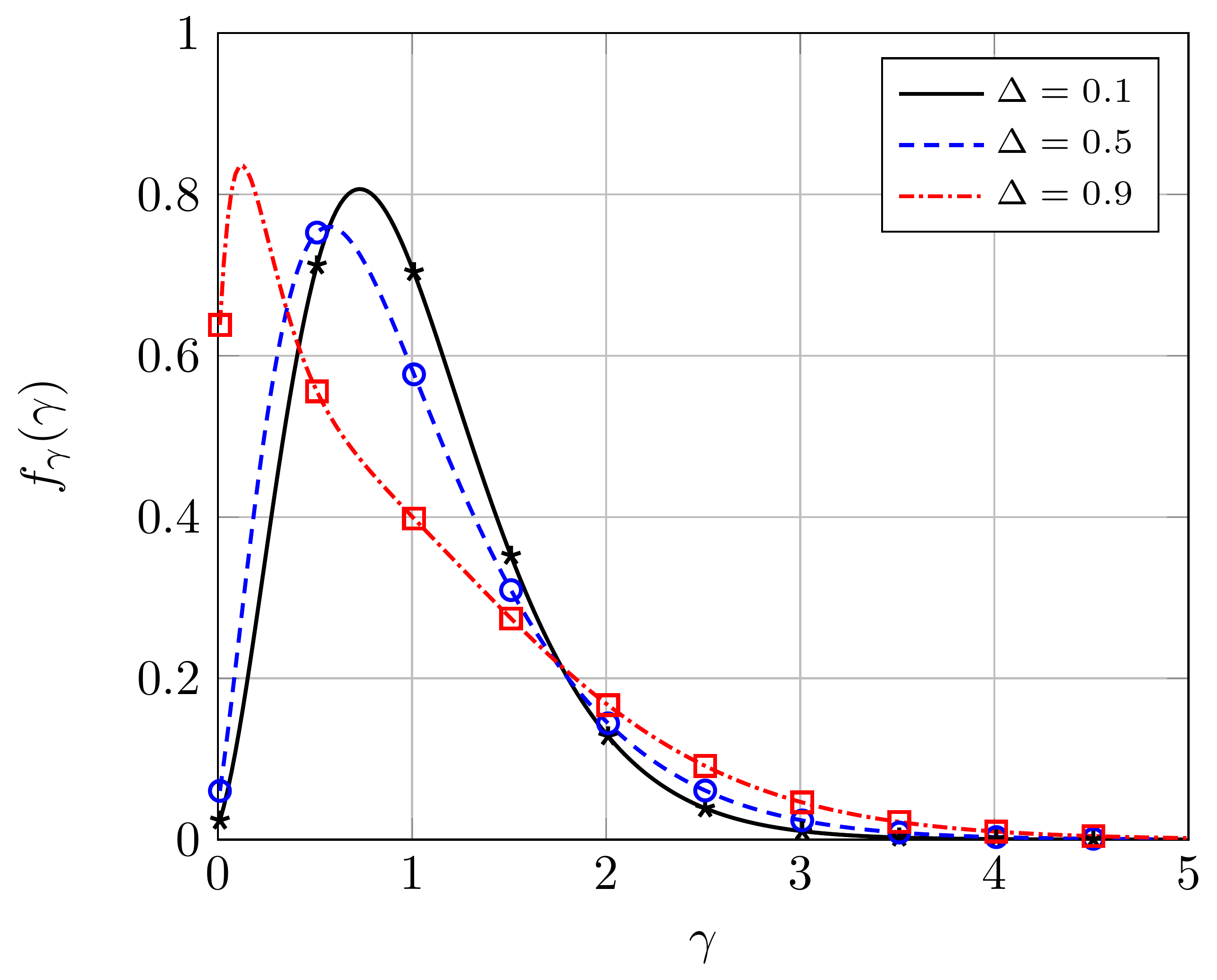}
          \caption{FTR power envelope distribution for different values of $\Delta$, with $K=15$, $m=5$. Solid lines correspond to the exact PDF derived from (\ref{eq:36}), markers correspond to the approximate PDF derived from (\ref{eq:36b}).}
    \label{fig:subfig5}
 \end{figure}
}

\begin{figure}[ht]
    \includegraphics[width=0.99\columnwidth]{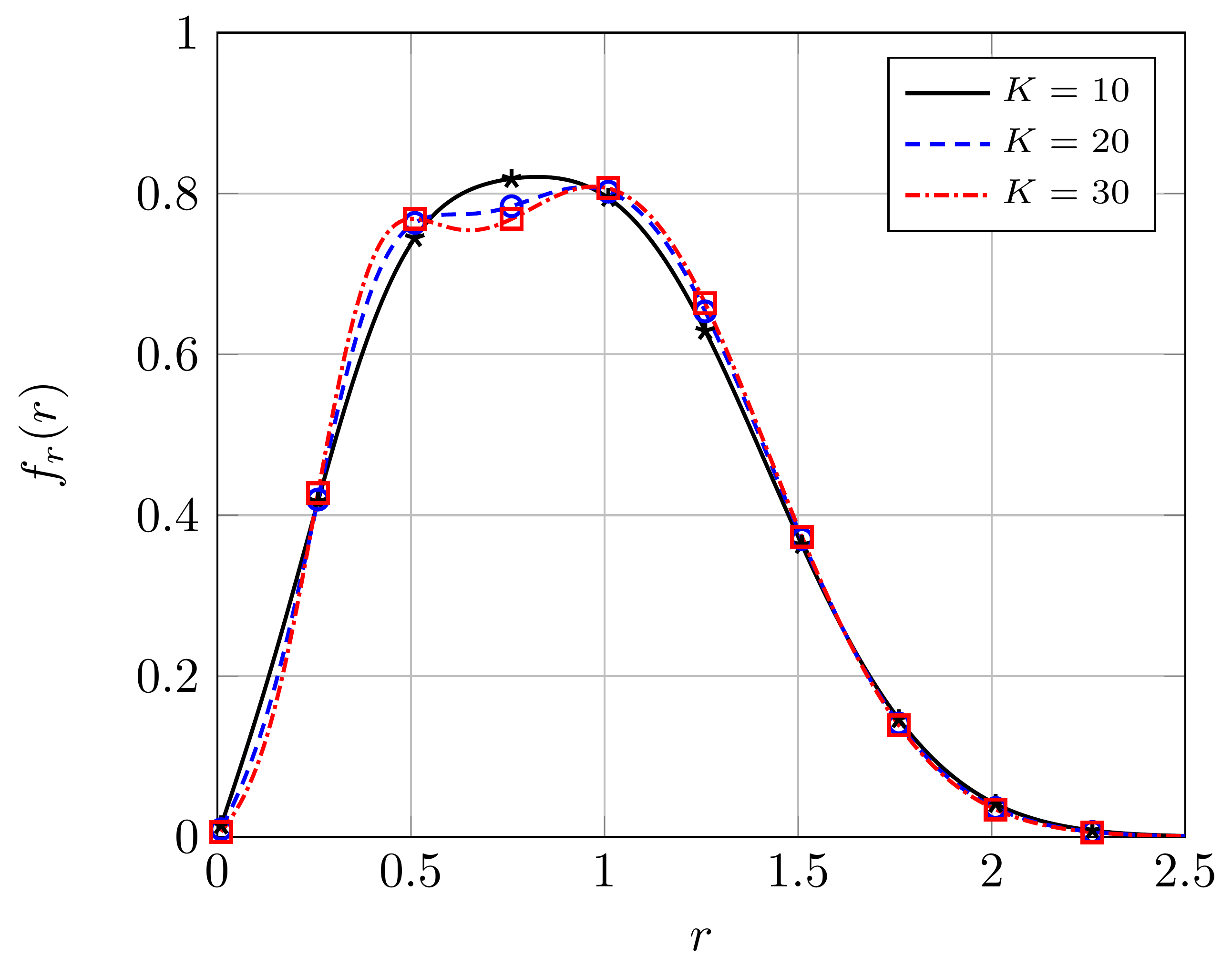}
          \caption{FTR signal envelope distribution for different values of $K$, with $m=5$, $\Delta=0.9$ and $\Omega=1$. Solid lines correspond to the exact PDF derived from (\ref{eq:36}), markers correspond to the approximate PDF derived from (\ref{eq:36b}).}
    \label{fig:subfig3}
  \end{figure}

\begin{figure}[ht]
    \includegraphics[width=0.99\columnwidth]{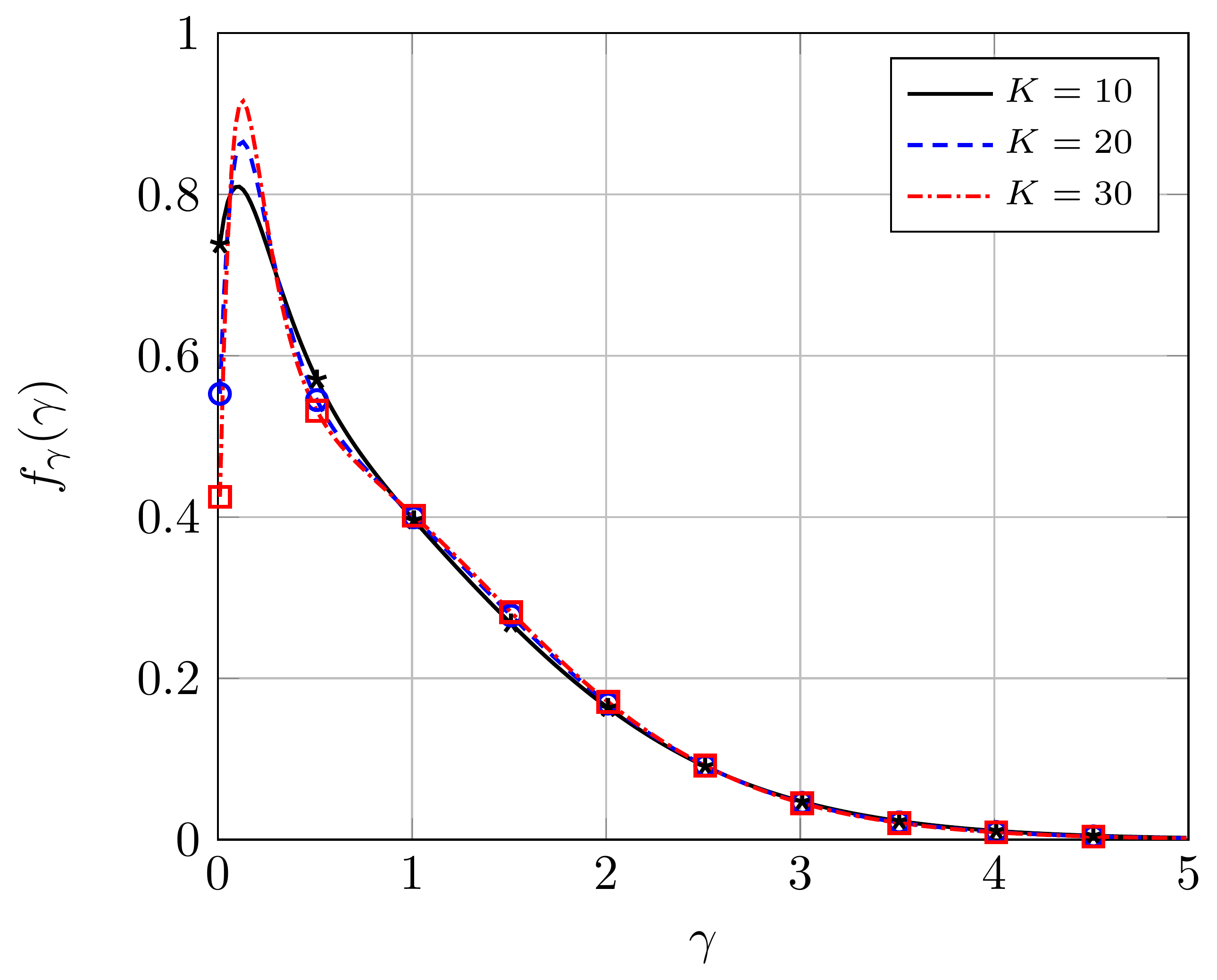}
          \caption{FTR power envelope distribution for different values of $K$, with $m=5$, $\Delta=0.9$ and $\bar\gamma=1$. Solid lines correspond to the exact PDF derived from (\ref{eq:36}), markers correspond to the approximate PDF derived from (\ref{eq:36b}).}
    \label{fig:subfig6}
  \end{figure}


\section{Empirical validation}
\label{fit}

In the previous sections, we have introduced the FTR fading model and derived its relevant statistics. We will now show its suitability for modeling small-scale fading in mmWave wireless links. We use the empirical results presented in \cite{Samimi2016} to validate the FTR fading model in the context of small-scale fading modeling of mmWave outdoor communications in the 28 GHz band. Details on the specific measurement configuration can be found in \cite{Samimi2016}.

A modified version of the Kolmogorov-Smirnov (KS) statistic has been used to define the error factor $\epsilon$ that quantifies the goodness of fit between the empirical and theoretical CDFs, denoted by $\hat{F}_r(\cdot)$ and ${F}_r(\cdot)$ respectively, i.e,
\begin{equation}
\epsilon\triangleq \max_{x}|\log_{10} \hat{F}_r(x)-\log_{10} F_r(x)|.
\end{equation}

Note that the CDF is used in log-scale in order to outweigh the fit in those amplitude values closer to zero, where the fading is more severe \cite{Francis2016}. With the above definition, we must highlight that a value of $\epsilon=1$ can be interpreted as a difference of one order of magnitude between the empirical and theoretical CDFs.

In Figs. \ref{f2} and \ref{f3} we compare the set of measurements corresponding to the LOS and NLOS cross-polarized scenarios described in \cite[Fig. 6]{Samimi2016}. For this set of measurements, the empirical CDFs lie within the theoretical CDFs corresponding to a Rician distribution with values of $K$ ranging from 2 to 7 (i.e. 3 to 8 dB). According to the KS statistic, the values of $K$ that provide the best fit to the Rician distribution are $K_{\text{LOS}}^{\text{Rice}}=4.04$ and $K_{\text{NLOS}}^{\text{Rice}}=4.78$ respectively. Such values of $K$ yield an error factor value of $\epsilon_{\text{LOS}}^{\text{Rice}}=0.3302$ and $\epsilon_{\text{NLOS}}^{\text{Rice}}=0.3571$. Now, using the proposed FTR fading model, we obtain the following set of parameters for the LOS and NLOS cases: ${\text{FTR}_{\text{LOS}}=\left(K=80,\Delta=0.5873,m=2\right)}$ and ${\text{FTR}_{\text{NLOS}}=\left(K=32.7,\Delta=0.8331,m=10\right)}$. 
Note that the parameter $m$ plays a key role in the goodness of fit, as it enables that the CDF can modify its concavity and convexity in order to better adjust the empirical data. 
For these parameters, the error factor value obtained by the FTR fit are $\epsilon_{\text{LOS}}^{\text{FTR}}=0.2246$ and $\epsilon_{\text{NLOS}}^{\text{FTR}}=0.2681$. Thus, a remarkable improvement is attained when using the FTR fading model instead of the simpler Rician model.

\begin{figure}[t]
    \includegraphics[width=1\columnwidth]{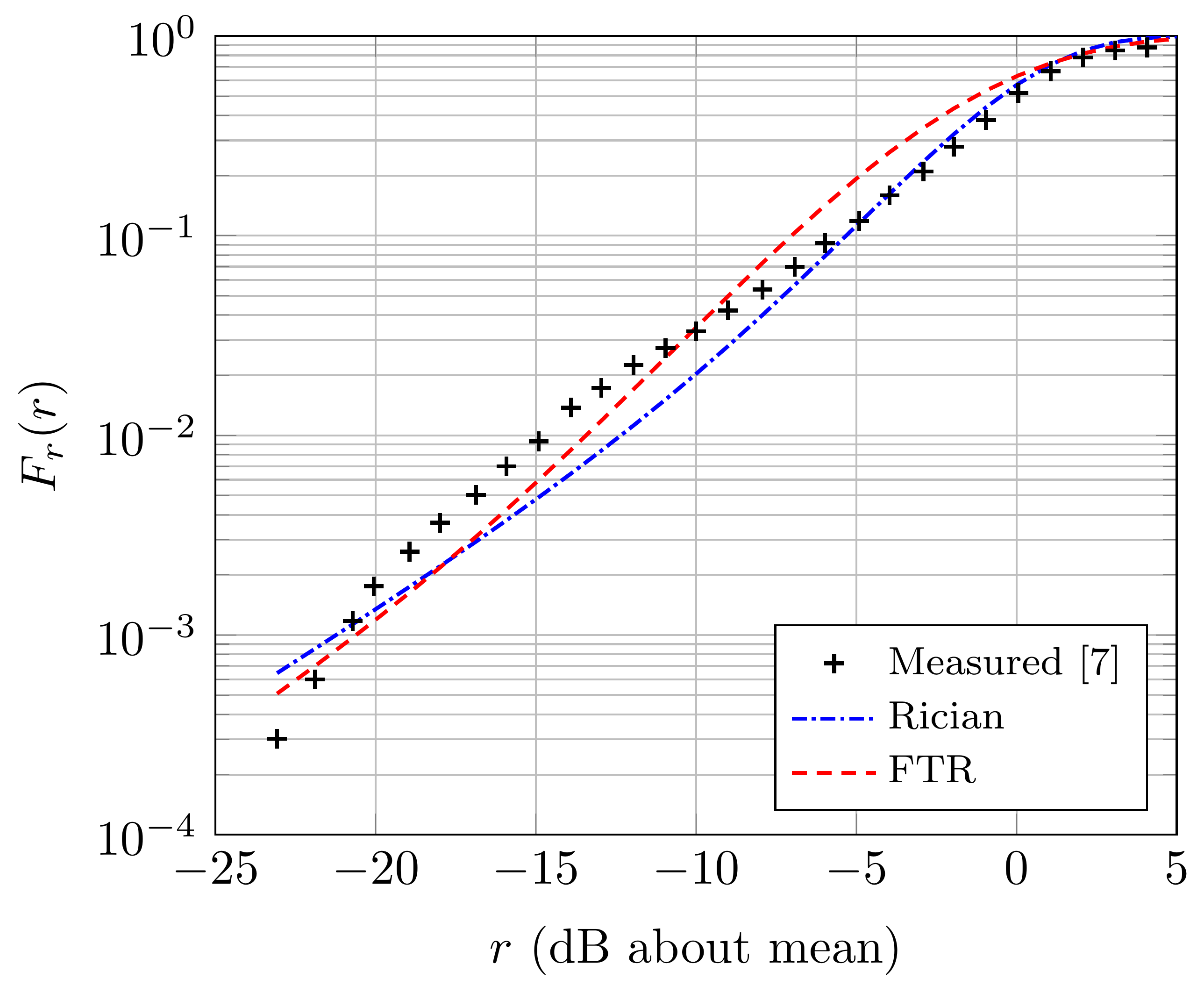}
\caption{Empirical vs theoretical CDFs of the received signal amplitude for LOS scenario. Parameter values are $K_{\text{Rice}}=4.04$ and $K_{\text{FTR}}=80$, $\Delta=0.5873$, $m=2$. Measured data obtained from \cite[Fig. 6, LOS]{Samimi2016}.}
 \label{f2}
 \end{figure}

 \begin{figure}[t]
    \includegraphics[width=1\columnwidth]{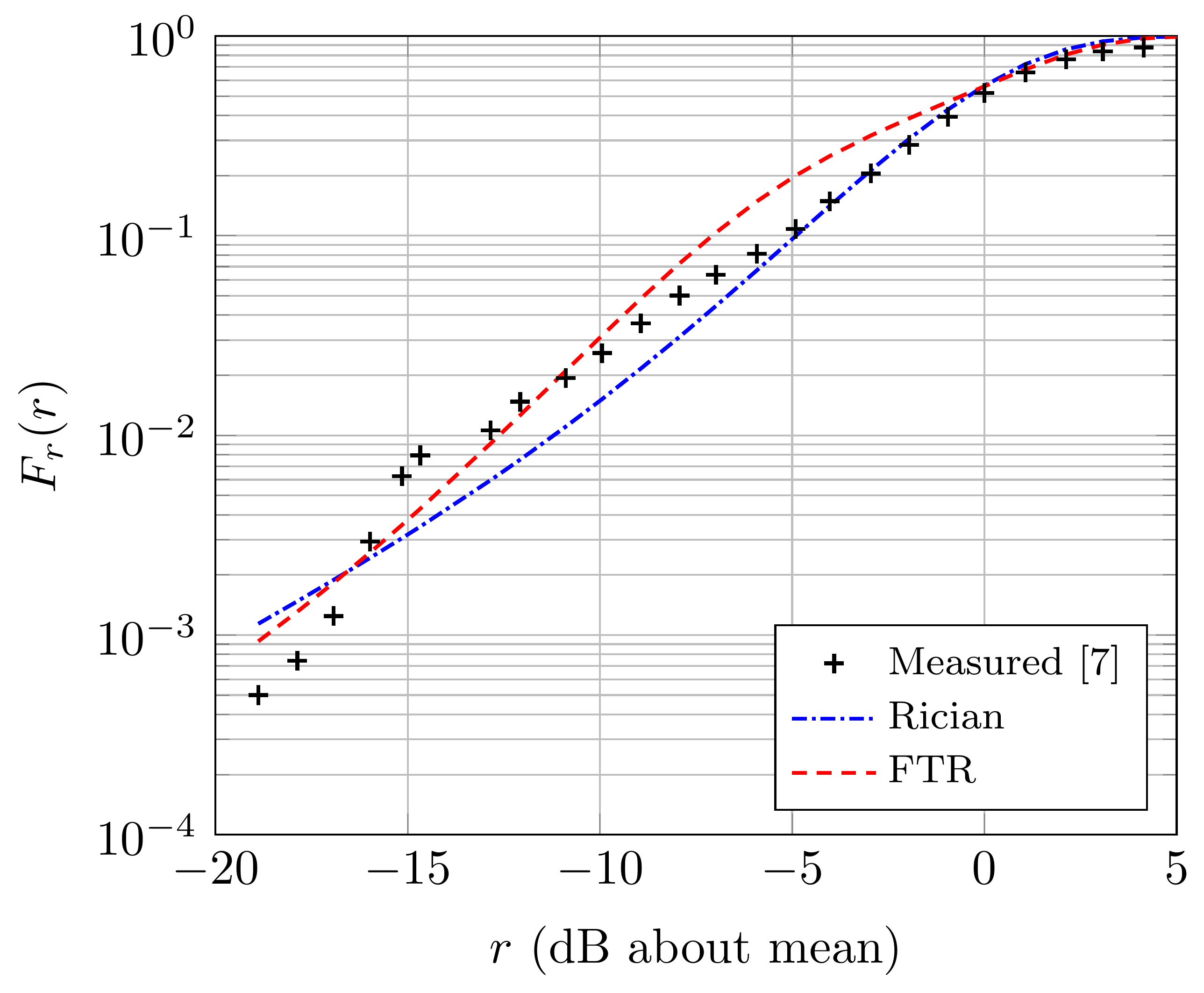}
\caption{Empirical vs theoretical CDFs of the received signal amplitude for NLOS scenario. Parameter values are $K_{\text{Rice}}=4.78$ and ${K_{\text{FTR}}=32.7}$, $\Delta=0.8331$, $m=10$. Measured data obtained from \cite[Fig. 6, NLOS]{Samimi2016}.}
\label{f3} 
\end{figure}
 
\section{Performance analysis of wireless communications systems}
\label{performance}
With the exact closed-form expressions of the MGF, PDF and CDF for the SNR of the proposed FTR fading channel derived above, we can now calculate many performance metrics of wireless communication systems operating in channels following this fading model. As an example of one application, we calculate the BER for a family of coherent modulations and the outage capacity. In both cases we will also obtain exact asymptotic expressions for the high-SNR regime.

\subsection{Average BER }

The average error rates can be calculated by averaging the conditional error probability (CEP), i.e., the error rate under AWGN, over the output SNR, that is:
\begin{equation} \label{eq:41}
	\overline{P_e}=\int^{\infty}_{0} P_E (x) f_\gamma (x) dx,
\end{equation}
where $P_E (x)$ denotes the CEP. Alternatively, integrating (\ref{eq:41}) by parts, the average error rate can be computed from the CDF as
\begin{equation} \label{eq:42}
	\overline{P_e}=-\int^{\infty}_{0} P^{'}_E (x) F_\gamma (x) dx,
\end{equation}
where $P^{'}_E(\gamma)$ is the first order derivative of the CEP.

The CEP for the bit error rate of many wireless communication systems with coherent detection is determined by \cite{Lopez2010}
\begin{equation}
\label{eq:43}
P_E \left( x \right) = \sum\limits_{r = 1}^R {\alpha _r Q\left( {\sqrt {\beta _r x} } \right)}, 
\end{equation}
where $Q(\cdot)$ is the Gauss $Q$-function and $\left\{ {\alpha _r ,\beta _r } \right\}_{r = 1}^R $ are modulation-dependent constants. The derivative of (\ref{eq:43}) is given by\begin{equation}
\label{eq:44}
P_E^{'}  \left( x \right) =  - \sum\limits_{r = 1}^R {\alpha _r \sqrt {\frac{{\beta _r }}
{{8\pi x}}} } e^{ - \frac{{\beta _r x}}
{2}}. 
\end{equation}

Introducing (\ref{eq:37}) and (\ref{eq:44}) into (\ref{eq:42}), and with the help of \cite[p. 286, (43)]{Srivastava1985}, a compact exact expression of the average BER can be found, as given in (\ref{eq:45}), in terms of the Lauricella function $F_D(\cdot)$ defined in \cite[p. 33, (4)]{Srivastava1985}.

Although the derived BER expression can be easily computed using the Euler form of the $F_D$ function, it does not provide insight about the impact of the different system parameters on performance. We now present an asymptotic, yet accurate, simple expression of the error rates for the  high SNR regime. First, note that the following equality holds:
\begin{equation}
\label{eq:46}
\begin{split}
  &  \left| {M_\gamma  \left( s \right)} \right| = \frac{{m^m \left( {1 + K} \right)}}
{{\left( {\sqrt {\left( {m + K} \right)^2  - \Delta ^2 K^2 } } \right)^m }}  \cr 
  &  \times P_{m - 1} \left( {\frac{{\left( {m + K} \right)}}
{{\sqrt {\left( {m + K} \right)^2  - \Delta ^2 K^2 } }}} \right)\frac{1}
{{\bar \gamma \left| s \right|}} + o\left( {\left| s \right|^{ - 1} } \right), \cr
\end{split}
\end{equation}
where we write a function $a(x)$  as $o(x)$ if $\lim_{x\rightarrow\infty} a(x)/x = 0$ and where the Legendre polynomial is calculated using  (\ref{eq:34}). Thus, performing a similar approach to that in \cite[Propositions 1 and 3]{Wang03}, we obtain, after some manipulation, the asymptotic expression
\begin{equation}
\label{eq:47}
\begin{split}
  & \overline{P_e} \approx \frac{{m^m \left( {1 + K} \right)}}
{{2\left( {\sqrt {\left( {m + K} \right)^2  - \Delta ^2 K^2 } } \right)^m }}\left( {\sum\limits_{r = 1}^R {\frac{{\alpha _r }}
{{\beta _r }}} } \right)  \cr 
  & \quad \quad  \times P_{m - 1} \left( {\frac{{\left( {m + K} \right)}}
{{\sqrt {\left( {m + K} \right)^2  - \Delta ^2 K^2 } }}} \right)\frac{1}
{{\bar \gamma }},\quad \bar \gamma  \gg 1. \cr
\end{split}
\end{equation}

\begin{figure*}[!t]
\normalsize
\begin{equation}
\label{eq:45}
\begin{split}
& \overline{P_e} = \frac{{ 1}}
{{2^{m - 1} }}\frac{{1 + K}}
{{\bar \gamma }}\left( {\frac{m}
{{\sqrt {\left( {m + K} \right)^2  - K^2 \Delta ^2 } }}} \right)^m \sum\limits_{q = 0}^{\left\lfloor {(m - 1)/2} \right\rfloor } {\left( { - 1} \right)^q } C_q^{m - 1} \left( {\frac{{m + K}}
{{\sqrt {\left( {m + K} \right)^2  - K^2 \Delta ^2 } }}} \right)^{m - 1 - 2q} \sum\limits_{r = 1}^R {\alpha _r \frac{1}
{{2\beta _r }}}   \cr 
  & \quad  \times \;F_D^{(4)} \left( {\frac{3}
{2},1 + 2q - m,m - q - \frac{1}
{2},m - q - \frac{1}
{2},1 - m;2;} \right.  \cr 
  & \quad \quad \quad \quad \quad \quad \quad \quad \left. { - \frac{{2m\left( {1 + K} \right)}}
{{\beta _r \left( {m + K} \right)\bar \gamma }}, - \frac{{2m\left( {1 + K} \right)}}
{{\beta _r \left( {m + K\left( {1 + \Delta } \right)} \right)\bar \gamma }}, - \frac{{2m\left( {1 + K} \right)}}
{{\beta _r \left( {m + K\left( {1 - \Delta } \right)} \right)\bar \gamma }}, - \frac{{2\left( {1 + K} \right)}}
{{\beta _r \bar \gamma }}} \right). \cr
\end{split}
\end{equation}
\hrulefill
\vspace*{4pt}
\end{figure*}

\subsection{Outage capacity }

The instantaneous channel capacity per unit bandwidth considering transmit and receive antennas is well-known to be given by
\begin{equation}
\label{eq:48}
C = \log _2 (1 + \gamma ).
\end{equation}
We define the outage capacity probability as the probability that the instantaneous channel capacity $C$ falls below a predefined threshold  $R_S$ (given in terms of rate per unit bandwidth), i.e.,
\begin{equation}
\label{eq:49}
P_{out}  = P\left( {C < R_S } \right) = P\left( {\log _2 (1 + \gamma ) < R_S } \right).
\end{equation}
Therefore
\begin{equation}
\label{eq:50}
P_{out}  = P\left( {\gamma  < 2^{R_S }  - 1} \right) = F_\gamma  \left( {2^{R_S }  - 1} \right).
\end{equation}
Thus, the outage capacity probability can be directly calculated from (\ref{eq:37}) specialized for $x=2^{R_S }  - 1$. This expression is exact and holds for all SNR values; however, it offers little insight about the effect of parameters on performance. Fortunately, we can obtain a simple expression in the high SNR regime as follows: From (\ref{eq:37}) and \cite[Proposition 5]{Wang03}, the CDF of $\gamma$ can be written as
\begin{equation}
\label{eq:51}
\begin{split}
  & F_\gamma  (x) = \frac{{m^m \left( {1 + K} \right)}}
{{\left( {\sqrt {\left( {m + K} \right)^2  - \Delta ^2 K^2 } } \right)^m }}  \cr 
  & \quad \quad P_{m - 1} \left( {\frac{{\left( {m + K} \right)}}
{{\sqrt {\left( {m + K} \right)^2  - \Delta ^2 K^2 } }}} \right)\frac{x}
{{\bar \gamma }} + o\left( {\bar \gamma ^{ - 1} } \right). \cr 
\end{split}
\end{equation}
Therefore, the outage capacity probability can be approximated in the large SNR regime by
\begin{equation}
\label{eq:52}
\begin{split}
  & P_{out}  \approx \frac{{m^m \left( {1 + K} \right)}}
{{\left( {\sqrt {\left( {m + K} \right)^2  - \Delta ^2 K^2 } } \right)^m }}  \cr 
  & \quad \quad P_{m - 1} \left( {\frac{{\left( {m + K} \right)}}
{{\sqrt {\left( {m + K} \right)^2  - \Delta ^2 K^2 } }}} \right)\frac{{2^{R_S }  - 1}}
{{\bar \gamma }},\quad \bar \gamma  \gg 1. \cr
\end{split}
\end{equation}


\section{Numerical results}
\label{numerical}

In this section we present figures showing numerical results for the performance metrics derived in the previous section under different fading conditions. All the results shown here have been analytically obtained by the direct evaluation of the exact expressions derived in this paper. Additionally, Monte Carlo simulations have been performed to validate the obtained expressions, and are also presented in the figures, showing an excellent agreement with the analytical results. 

Figs. \ref{fig:subfig7} and \ref{fig:subfig8} show results for, respectively, the average BER and the outage capacity probability as a function of the average SNR assuming $K=8$ and similar ($\Delta=0.9$) and dissimilar ($\Delta=0.1$) specular components, as well as light ($m=8$) and  strong ($m=2$) fluctuations of these components. For the average BER, binary phase-shift keying (BPSK) modulation is assumed, which can be obtained by setting $R=1$, $\alpha_1=1$ and $\beta_1=2$ in (\ref{eq:45}). For the outage capacity probability, a threshold $R_S=2$ is assumed. Both presented metrics actually show akin behavior: dissimilar specular components experiencing lighter fluctuations yield better performance, i.e. lower average BER and outage capacity probability. It is interesting to note that, for both metrics, there is an inflection point (in the log-log scale)  for ($\Delta=0.1$) and ($m=8$), which virtually disappears as $m$ decreases.

\begin{figure}[t]
     \includegraphics[width=1\columnwidth]{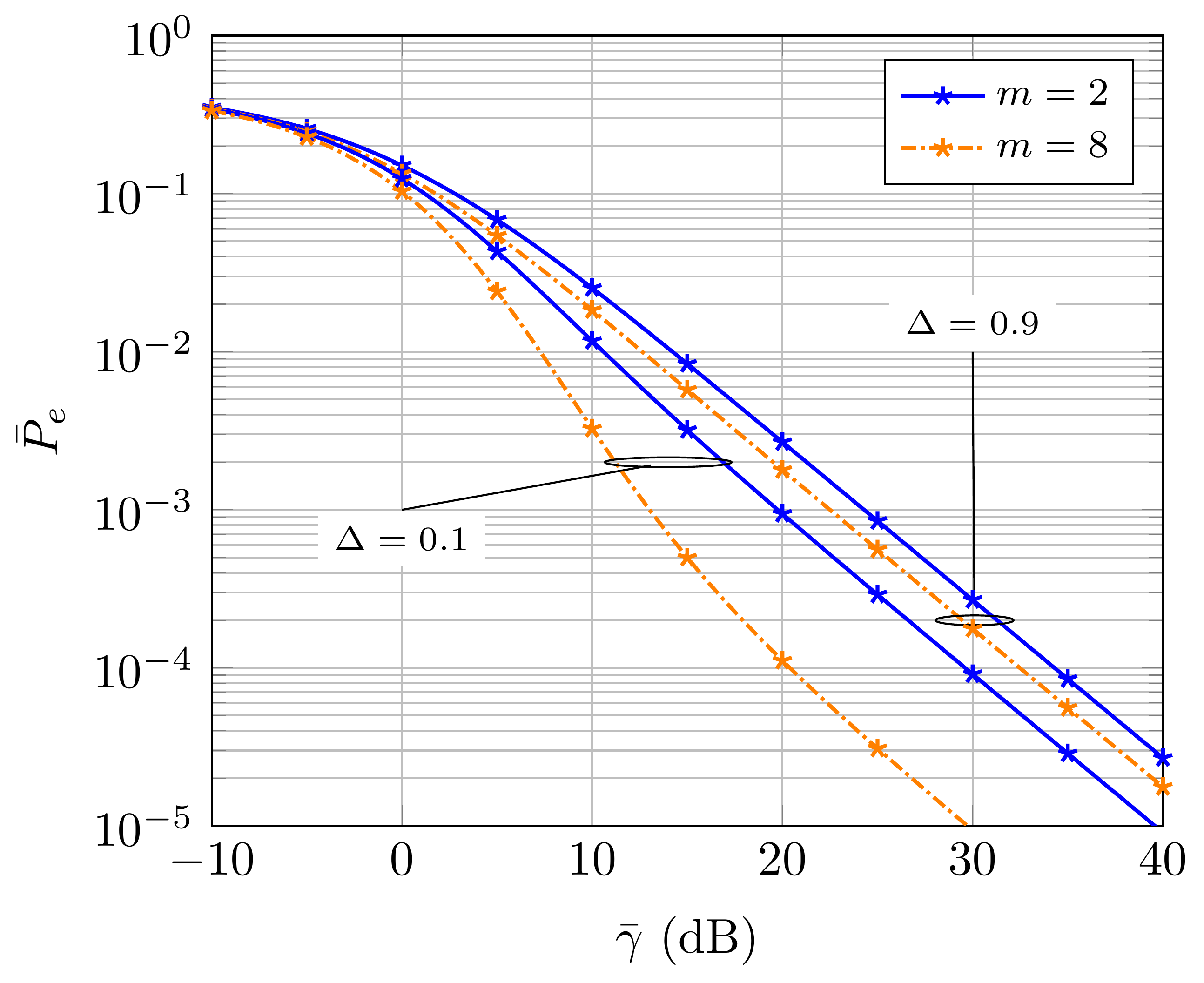}
          \caption{Average BER of BPSK modulation vs. average SNR for different values of $m$ and $\Delta$. Parameter value $K=8$. Markers correspond to Monte Carlo simulations.}
    \label{fig:subfig7}
 \end{figure}

\begin{figure}[t]
    \includegraphics[width=1\columnwidth]{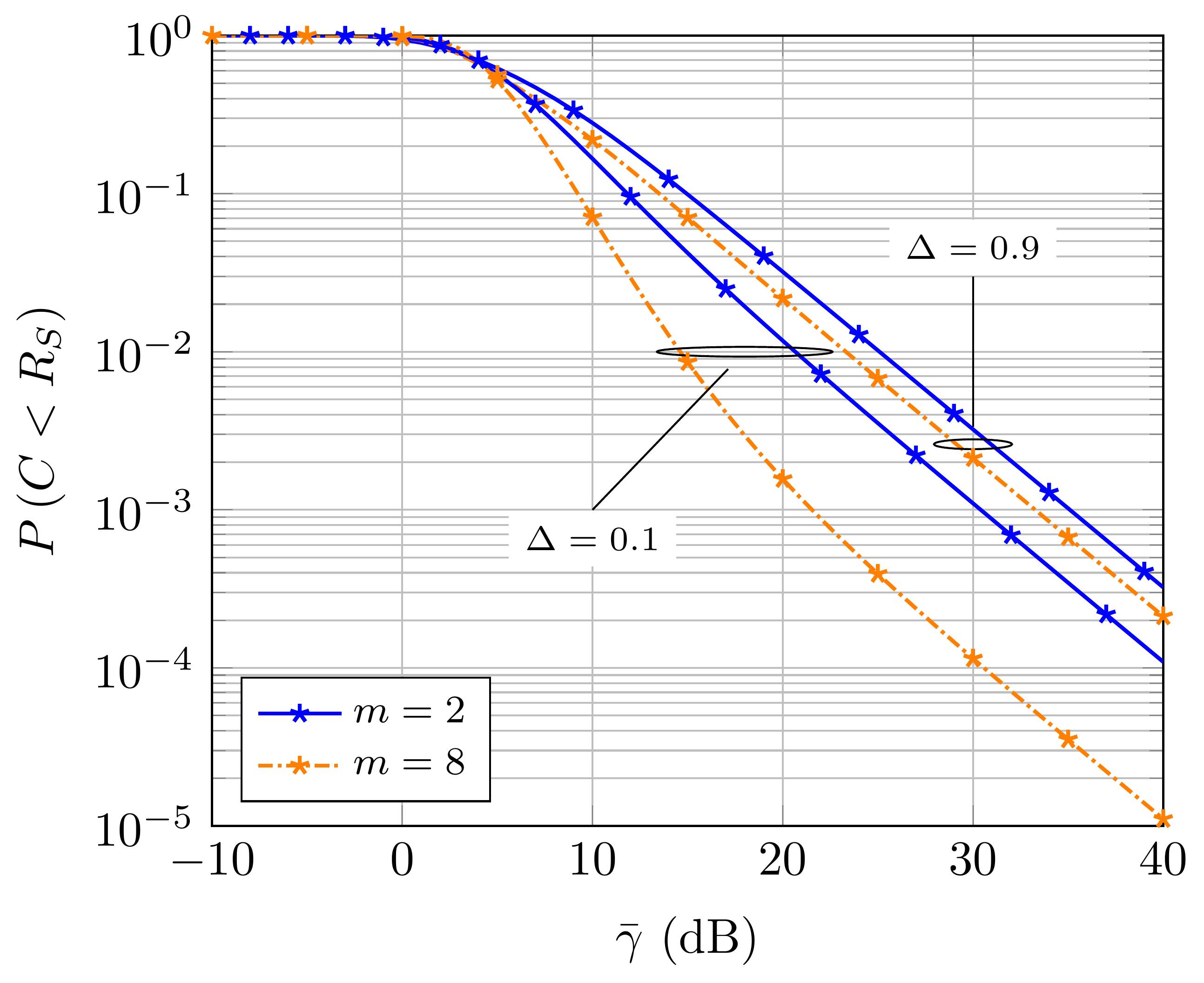}
          \caption{Outage capacity probability vs. average SNR for different values of $m$ and $\Delta$. Parameter values $R_S=2$ and $K=8$. Markers correspond to Monte Carlo simulations.}
    \label{fig:subfig8}
 \end{figure}

\section{Further generalizations of the TWDP fading model}
The FTR model, as presented and analyzed in this work, assumes that the two specular components are fully correlated and, therefore, is an appropriate model when both specular components are affected by the same scatterers or electromagnetic disturbances. A more general model would consider a partial correlation between the specular components. The analytical difficulty of such model seems to be significantly higher than the one presented here. However, in the limit case when both specular components are independent and the complex baseband received signal can be expressed as
\begin{equation}
\label{eq:53}
\begin{split}
V_r  = V_1 \sqrt {\zeta _1 } \exp \left( {j\phi _1 } \right) + V_2 \sqrt {\zeta _2 } \exp \left( {j\phi _2 } \right) + X + jY,
\end{split}
\end{equation}
where $\zeta_1$ and $\zeta_2$ are i.i.d. random variables whose PDF is given in (\ref{eq:03}), it is possible to obtain the MGF of the received SNR in closed-form.
\begin{lemma}  
Let us consider the fading model as described in (\ref{eq:53}). Then, the MGF of the received SNR $\gamma$ will be given by
\begin{equation}
\label{eq:54}
\begin{split}
  & M_\gamma  \left( s \right) = {{1 + K} \over {1 + K - \bar \gamma s}}\left( {1 - {K \over m}g(s) + {{K^2 \Delta ^2 } \over {4m^2 }}g^2 (s)} \right)^{ - m}   \cr 
  & \quad \quad  \times _2 F_1 \left( {m ,m ;1;{{K^2 \Delta ^2 g^2 (s)} \over {4m^2  - 4mKg(s) + K^2 \Delta ^2 g^2 (s)}}} \right), 
\end{split}
\end{equation}
where $K$ and $\Delta$ are defined, respectively, as in (\ref{eq:05}) and (\ref{eq:06}), and $g(s)$ is defined as
\begin{equation}
\label{eq:55}
\begin{split}
g(s) = {{\bar \gamma s} \over {1 + K - \bar \gamma s}}.
\end{split}
\end{equation}
\end{lemma}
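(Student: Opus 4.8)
The plan is to proceed by conditioning and iterated expectation, mirroring the structure of the proof of the first Lemma (eq.~(\ref{eq:08})) but now handling the two amplitude-scaling variables $\zeta_1,\zeta_2$ separately. First I would condition on $\zeta_1=u_1$ and $\zeta_2=u_2$. Given these values, the received signal in (\ref{eq:53}) is exactly a TWDP/GTR-U signal whose two specular amplitudes are $V_1\sqrt{u_1}$ and $V_2\sqrt{u_2}$ and whose diffuse part is $X+jY$ with variance $2\sigma^2$. Averaging first over the uniform phases $\phi_1,\phi_2$ and the Gaussian diffuse term, the conditional MGF of $\gamma$ is the standard TWDP conditional MGF; the key fact I would invoke (as used in the Rician-shadowed and $\kappa$-$\mu$-shadowed derivations) is that the phase-and-diffuse average of $\exp(s\gamma)$, for fixed specular amplitudes, produces a rational function of $s$ times a modified Bessel factor $I_0$ of an argument proportional to $V_1 V_2\sqrt{u_1 u_2}$. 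Concretely, conditioned on $u_1,u_2$, one gets something of the form $\frac{1}{1+K-\bar\gamma s}\exp\!\big(\text{(linear in }u_1+u_2)\,g(s)\big)\, I_0\!\big(\text{(prop.\ to }\sqrt{u_1u_2})\, g(s)\big)$ with $g(s)$ as in (\ref{eq:55}), after normalizing $V_1^2,V_2^2$ through $K$ and $\Delta$.

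Next I would take the expectation over $\zeta_1$ and $\zeta_2$, which are i.i.d.\ unit-mean Gamma$(m)$ with the PDF (\ref{eq:03}). The $\exp$ factor contributes a term $e^{-c u_i}$ for each $i$ with $c$ linear in $g(s)$, which simply tilts each Gamma density; the remaining obstacle is the $I_0(\text{(const)}\sqrt{u_1 u_2})$ coupling. Here I would expand $I_0(z)=\sum_{k\ge 0}\frac{(z/2)^{2k}}{(k!)^2}$, so that the $\sqrt{u_1 u_2}$ dependence factorizes into $u_1^k u_2^k$. Each $u_i$-integral then becomes a Gamma integral $\int_0^\infty u^{m-1+k}e^{-(m+c)u}\,du = \Gamma(m+k)/(m+c)^{m+k}$, giving a factor $(m)_k$ per variable. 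Collecting the two identical factors yields $\sum_k \frac{[(m)_k]^2}{(k!)^2}\,w^k$ for an appropriate argument $w$, which is precisely $_2F_1(m,m;1;w)$. Matching the prefactors and the argument $w$ to (\ref{eq:54})–(\ref{eq:55}) is then bookkeeping: the $(m+c)^{-m}$ pieces assemble into the $\big(1-\tfrac{K}{m}g(s)+\tfrac{K^2\Delta^2}{4m^2}g^2(s)\big)^{-m}$ factor, and $w=\frac{K^2\Delta^2 g^2(s)}{4m^2-4mKg(s)+K^2\Delta^2 g^2(s)}$.

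The main obstacle I anticipate is getting the conditional TWDP MGF in the right normalized form, i.e.\ correctly folding the constant $2\sigma^2$ and the definitions (\ref{eq:05})–(\ref{eq:06}) of $K,\Delta$ into the exponent and the Bessel argument so that the $u_1+u_2$ coefficient and the $\sqrt{u_1u_2}$ coefficient come out with the clean ratios $K/m$ and $K\Delta/(2m)$; a sign or factor-of-two slip there propagates through to the wrong $_2F_1$ argument. A secondary technical point is justifying the interchange of the $I_0$ series with the two $\zeta_i$-integrals, which is routine since all terms are nonnegative for $s$ in the region of convergence (or by analytic continuation in $s$ afterward). Once the $k$-sum is recognized as $_2F_1(m,m;1;\cdot)$ via the Gauss series and the $(m)_k$ Pochhammer factors, the result (\ref{eq:54}) follows.
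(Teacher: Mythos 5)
Your proposal is correct and follows essentially the same route as the paper's Appendix E: condition on $\zeta_1=u_1$, $\zeta_2=u_2$, identify the conditional model as TWDP with specular amplitudes $V_1\sqrt{u_1}$, $V_2\sqrt{u_2}$, write its MGF as $\mathcal{B}(s)\exp\bigl((u_1V_1^2+u_2V_2^2)g(s)/(2\sigma^2)\bigr)I_0\bigl(\sqrt{u_1u_2}\,V_1V_2\,g(s)/\sigma^2\bigr)$, and then average over the two independent Gamma variables. The only difference is the last step: the paper disposes of the double integral by citing two tabulated transforms in Erd\'elyi, while you expand $I_0$ in its power series so the $\sqrt{u_1u_2}$ coupling factorizes, do two Gamma integrals to get $[(m)_k]^2$ Pochhammer factors, and resum via the Gauss series into ${}_2F_1(m,m;1;\cdot)$ — a self-contained and perfectly valid evaluation that indeed reproduces (\ref{eq:54}), since with $k_i=V_i^2/(2\sigma^2)$ one has $k_1+k_2=K$, $k_1k_2=K^2\Delta^2/4$, so $(m-k_1g)(m-k_2g)=m^2\bigl(1-\tfrac{K}{m}g+\tfrac{K^2\Delta^2}{4m^2}g^2\bigr)$ and the ${}_2F_1$ argument matches. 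One small imprecision: the conditional exponent is not a function of $u_1+u_2$ (the coefficients $k_1g$ and $k_2g$ differ unless $V_1=V_2$); the clean dependence on $K/m$ and $K^2\Delta^2/(4m^2)$ emerges only after the two tilted Gamma integrals are multiplied and the symmetric functions $k_1+k_2$, $k_1k_2$ are substituted, which your bookkeeping step implicitly does.
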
 
\begin{proof} See Appendix \ref{App5}.
\end{proof}

This model is an alternative generalization of the TWDP model which is different to the proposed FTR model, and can be applied when the specular components follow very different paths and are affected by different scatterers. A thorough analysis of such model is of great interest and is left for future work.

\section{Conclusions}
\label{conc}

The FTR fading model was introduced to characterize the statistics of a received signal with dominant specular components along with random fluctuations about those components. A detailed statistical characterization is presented, and closed-form expressions of the PDF, CDF and MGF of the model are derived. Additionally, as the exact PDF and CDF are given in terms of a confluent hypergeometric function, alternative approximated expressions for these statistics are given as finite summations  of elementary functions, which allows for a simple performance analysis. As an example of application of the model, we have derived both exact and asymptotic expressions for the outage capacity and the BER for a family of modulation schemes. Both performance metrics show that dissimilar specular components experiencing lighter fluctuations yield better performance. The proposed model is also shown to closely model small-scale fading, which has been exemplified in the context of mmWave communications, on which the fit to empirical measurements in the 28 GHz band shows great improvements over the Rician fading model. 

\appendices

\section{Proof of Lemma 1}
\label{App1}

Let us consider the fading channel model given in (\ref{eq:04}) conditioned to a particular realization $\zeta=u$ of the random variable modeling the fluctuation of the specular components. In this case, we can write
\begin{equation}
\label{eq:12}
\left. {V_r } \right|_{\zeta  = u}  = \sqrt u V_1 \exp \left( {j\phi _1 } \right) + \sqrt u V_2 \exp \left( {j\phi _2 } \right) + X + jY
\end{equation}
This corresponds to the classical TWDP fading model where the amplitudes of the specular components are given by $\sqrt u V_1$ and $\sqrt u V_2$, for which the following parameters can be defined:
\begin{equation}
\label{eq:13}
K_u  = \frac{{uV_1^2  + uV_2^2 }}
{{2\sigma ^2 }} = u\frac{{V_1^2  + V_2^2 }}
{{2\sigma ^2 }},
\end{equation}
\begin{equation}
\label{eq:14}
\Delta _u  = \frac{{2\sqrt u V_1 \sqrt u  V_2 }}
{{uV_1^2  + uV_2^2 }} = \frac{{2V_1 V_2 }}
{{V_1^2  + V_2^2 }}.
\end{equation}
It is clear that these parameters are related to those defined in (\ref{eq:05}) and (\ref{eq:06}) for the FTR fading model by
\begin{equation}
\label{eq:15}
K_u  = uK,
\end{equation}
\begin{equation}
\label{eq:16}
\Delta_u = \Delta.
\end{equation}
The conditional average SNR for the fading model described in (\ref{eq:12}) will be
\begin{equation}
\label{eq:17}
\begin{split}
  & \bar \gamma _u  = \left( {E_b /N_0 } \right)\left( {uV_1^2  + uV_2^2  + 2\sigma ^2 } \right)  \cr 
  & \quad \, = \left( {E_b /N_0 } \right)2\sigma ^2 \left( {1 + K_u } \right). \cr
\end{split}
\end{equation}
The MGF of the TWDP fading model was shown in \cite {Rao2015} to be given in closed-form as
\begin{equation}
\label{eq:18}
\begin{split}
  & M_{\gamma _u } \left( s \right) = \frac{{1 + K_u }}
{{1 + K_u  - \bar \gamma _u s}}\exp \left( {\frac{{K_u \bar \gamma _u s}}
{{1 + K_u  - \bar \gamma _u s}}} \right)  \cr 
  & \quad \quad \quad \quad \quad  \times I_0 \left( {\frac{{\Delta _u K_u \bar \gamma _u s}}
{{1 + K_u  - \bar \gamma _u s}}} \right), \cr
\end{split}
\end{equation}
where $I_0(\cdot)$ is the zero-order modified Bessel function of the first kind. This MGF can be written in terms of the $K$ and $\Delta$ parameters defined for the FTR model. Note that from (\ref{eq:07}) and (\ref{eq:17}) we can write, respectively, 
\begin{equation}
\label{eq:19}
\frac{{\left( {1 + K} \right)}}
{{\bar \gamma }} = \frac{1}
{{\left( {E_b /N_0 } \right)2\sigma ^2 }},
\end{equation}
\begin{equation}
\label{eq:20}
\frac{{\left( {1 + K_u } \right)}}
{{\bar \gamma _u }} = \frac{1}
{{\left( {E_b /N_0 } \right)2\sigma ^2 }},
\end{equation}
and equating (\ref{eq:19}) and (\ref{eq:20}) it is clear that
\begin{equation}
\label{eq:21}
\frac{{\left( {1 + K_u } \right)}}
{{\bar \gamma _u }} = \frac{{\left( {1 + K} \right)}}
{{\bar \gamma }}.
\end{equation}
Now, taking into account (\ref{eq:15}), (\ref{eq:16}) and (\ref{eq:21}), we have
\begin{equation}
\label{eq:22}
\begin{split}
  & M_{\gamma _u } \left( s \right) = \frac{{\frac{{1 + K_u }}
{{\bar \gamma _u }}}}
{{\frac{{1 + K_u }}
{{\bar \gamma _u }} - s}}\exp \left( {\frac{{K_u s}}
{{\frac{{1 + K_u }}
{{\bar \gamma _u }} - s}}} \right)  \cr 
  & \quad \quad \quad \quad \quad \quad \times I_0 \left( {\Delta _u \frac{{K_u s}}
{{\frac{{1 + K_u }}
{{\bar \gamma _u }} - s}}} \right)  \cr 
  & \quad   = \frac{{\frac{{1 + K}}
{{\bar \gamma }}}}
{{\frac{{1 + K}}
{{\bar \gamma }} - s}}\exp \left( {\frac{{uKs}}
{{\frac{{1 + K}}
{{\bar \gamma }} - s}}} \right)   I_0 \left( {\Delta \frac{{uKs}}
{{\frac{{1 + K}}
{{\bar \gamma }} - s}}} \right), \cr
\end{split}
\end{equation}
and therefore the conditional MGF can be written as
\begin{equation}
\label{eq:23}
M_{\gamma _u } \left( s \right) = \mathcal{B}\left( s \right)e^{u \mathcal{A}\left( s \right)} I_0 \left( {u\Delta \mathcal{A}\left( s \right)} \right),
\end{equation}
where we have defined
\begin{equation}
\label{eq:24}
\mathcal{A}\left( s \right) = \frac{{K\bar \gamma s}}
{{1 + K - \bar \gamma s}},
\quad \quad
\mathcal{B}\left( s \right) = \frac{{1 + K}}
{{1 + K - \bar \gamma s}}.
\end{equation}

The MGF of the SNR of the FTR model can be obtained by averaging (\ref{eq:23}) over all possible realizations $u$ of the random variable $\zeta$, i.e., 
\begin{equation}
\label{eq:26}
\begin{split}
  & M_\gamma  \left( s \right) = \int_0^\infty  {M_{\gamma _u } \left( s \right)} f_\zeta  \left( u \right)du   \cr 
  & = B\left( s \right)\frac{{m^m }}
{{\Gamma \left( m \right)}}\int_0^\infty  {u^{m - 1} e^{ - u\left( {m - A\left( s \right)} \right)} I_0 \left( {u\Delta A\left( s \right)} \right)} du. \cr
\end{split}
\end{equation}
The integral in  (\ref{eq:26}) can be solved in closed-form, as from \cite[p. 196 (8)]{Erdelyi1854} we have
\begin{equation}
\label{eq:27}
\int_0^\infty  {t^\mu  e^{ - \beta t} I_0 \left( {\alpha t} \right)} dt = \Gamma \left( {\mu  + 1} \right)\theta ^{ - \mu  - 1} P_\mu  \left( {\beta /\theta } \right),
\end{equation}
where $\theta  = \sqrt {\beta ^2  - \alpha ^2 } $. Using  (\ref{eq:27}) to solve  (\ref{eq:26}), after some algebraic manipulations,  (\ref{eq:08}) is obtained.

\section{Proof of corollary 1}
\label{App2}

For $m=1$, the Legendre function in the MGF given in (\ref{eq:08}) has a zero degree. Taking into account that $P_0(z)=1$ for all $z$, we can write
\begin{equation}
\label{eq:29}
M_\gamma  \left( s \right) = \frac{{\left( {1 + K} \right)}}
{{\sqrt {\mathcal{R}\left( {1,K,\Delta ;s} \right)} }}.
\end{equation}
Considering now that
\begin{equation}
\label{eq:30}
\begin{split}
  & R\left( {1,K,\Delta ;s} \right)) = \left[ {\left( {1 + K} \right)^2  - \Delta ^2 K^2 } \right]\bar \gamma ^2 s^2   \cr 
  & \quad  - 2\left( {1 + K} \right)^2 \bar \gamma s + \left( {1 + K} \right)^2   \cr 
  & \quad  = \left( {1 + K} \right)^2 \left[ {\left( {1 - \frac{{\Delta ^2 K^2 }}
{{\left( {1 + K} \right)^2 }}} \right)\bar \gamma ^2 s^2  - 2\bar \gamma s + 1} \right], \cr
\end{split}
\end{equation}
and introducing (\ref{eq:30}) into (\ref{eq:29}) we obtain
\begin{equation}
\label{eq:31}
M_\gamma  \left( s \right) = \frac{1}
{{\sqrt {\left( {1 - \frac{{\Delta ^2 K^2 }}
{{\left( {1 + K} \right)^2 }}} \right)\bar \gamma ^2 s^2  - 2\bar \gamma s + 1} }}.
\end{equation}
By noting that the MGF of the SNR in Nakagami-$q$ (Hoyt) fading is given by
\begin{equation}
\label{eq:32}
M_{Hoyt} \left( s \right)  = \frac{1}
{{\sqrt {\frac{{4q^2 }}
{{\left( {1 + q^2 } \right)^2 }}\bar \gamma ^2 s^2  - 2\bar \gamma s + 1} }},
\end{equation}
and equating
\begin{equation}
\label{eq:33}
1 - \frac{{\Delta ^2 K^2 }}
{{\left( {1 + K} \right)^2 }} = \frac{{4q^2 }}
{{\left( {1 + q^2 } \right)^2 }},
\end{equation}
the expression given in (\ref{eq:28})  for the $q$ parameter is finally obtained.

\section{Proof of Lemma 2}
\label{App3}

We note that the polynomial $\mathcal{R}\left( {m,k,\Delta ;s} \right)$ defined in (\ref{eq:09}) can be factorized as
\begin{equation}
\label{eq:38}
\begin{split}
  & \mathcal{R}\left( {m,K,\Delta ;s} \right) = \left[ {m\left( {1 + K} \right) - \left( {m + K\left( {1 + \Delta } \right)} \right)\bar \gamma s} \right]  \cr 
  & \quad \quad \quad  \times \left[ {m\left( {1 + K} \right) - \left( {m + K\left( {1 - \Delta } \right)} \right)\bar \gamma s} \right]. \cr
\end{split}
\end{equation}
For the sake of compactness, let us define the following parameters;
\begin{equation}
\label{eq:39}
\begin{split}
  & a_1  = \frac{{m\left( {1 + K} \right)}}
{{\left( {m + K} \right)\bar \gamma }},\quad \quad \quad a_2  = \frac{{m\left( {1 + K} \right)}}
{{\left( {m + K\left( {1 + \Delta } \right)} \right)\bar \gamma }},\quad   \cr 
  & a_3  = \frac{{m\left( {1 + K} \right)}}
{{\left( {m + K\left( {1 - \Delta } \right)} \right)\bar \gamma }},\quad a_4  = \frac{{1 + K}}
{{\bar \gamma }}. \cr
\end{split}
\end{equation}
From (\ref{eq:08}),  using (\ref{eq:34}) and (\ref{eq:38}), the MGF of $\gamma$ can be rewritten as
\begin{equation}
\label{eq:40}
\begin{split}
  & M_\gamma  \left( s \right) = \frac{{ - \left( {a_2 a_3 } \right)^{\frac{m}
{2}} }}
{{\left( {2a_4 } \right)^{m - 1} }}\sum\limits_{q = 0}^{\left\lfloor {(m - 1)/2} \right\rfloor } {\left( { - 1} \right)^q }   \cr 
  & \; \times C_q^{m - 1} \left( {\frac{{\left( {a_2 a_3 } \right)^{\frac{1}
{2}} }}
{{a_1 }}} \right)^{m - 1 - 2q} \frac{1}
{s}\left( {1 - \frac{{a_1 }}
{s}} \right)^{m - 1 - 2q}   \cr 
  & \; \times \left( {1 - \frac{{a_2 }}
{s}} \right)^{\frac{1}
{2} + q - m} \left( {1 - \frac{{a_3 }}
{s}} \right)^{\frac{1}
{2} + q - m} \left( {1 - \frac{{a_4 }}
{s}} \right)^{m - 1} . \cr
\end{split}
\end{equation}
Taking into account that the PDF is related to the MGF by the inverse Laplace transform, i.e., 
$f_{\gamma}(x)=\L^{-1}[M_{\gamma}(-s);x]$,  (\ref{eq:36}) follows from (\ref{eq:40}) and the
Laplace transform pair given in [17, eq. (4.24.3)]. On the other hand, (\ref{eq:37}) is obtained analogously by considering that  $F_{\gamma}(x)=\L^{-1}[M_{\gamma}(-s)/s;x]$.

\section{Proof of Lemma 3}
\label{App4}

The exact expression for the TWDP fading power envelope PDF has integral form \cite{Durgin2002}
\begin{align}
\label{eq:app01}
&f^{\text{T}}_{\gamma}(\gamma)=\beta\exp\left\{-\beta\gamma\right\}\exp\left\{-K\right\}\frac{1}{\pi}\\
&\times \int_{0}^{\pi}\exp\left\{-K\Delta\cos\theta\right\}I_{0}\left(2\sqrt{{\gamma}\beta K(1-\Delta\cos\theta)}\right)d\theta\nonumber,
\end{align}
where $\beta=(1+K)/\bar\gamma$. In order to circumvent this issue, a family of PDFs that approximate the exact PDF of the TWDP fading model was given in \cite[eq. (17)]{Durgin2002}. These approximate PDFs for the TWDP power envelope are expressed in closed-form as:
\begin{equation}
\label{eq:app02}
\begin{split}
\widehat f_{\gamma}^{\text{T}}(\gamma)\approx  \sum_{i=1}^{M}\frac{\alpha_i}{2}  & \left\{\mathcal{F}\left(\gamma;\beta,K(1-\delta_i)\right) \right. \\&
+\left. \mathcal{F}\left(\gamma;\beta,K(1+\delta_i)\right)\right\},
\end{split}
\end{equation}
where  
\begin{align}
\label{eq:app03}
\mathcal{F}\left(\gamma;\beta,K\right)\triangleq \beta \exp (-\beta\gamma)\exp(-K) I_0\left(2\sqrt{\gamma\beta K}\right),
\end{align}
and the coefficients $\alpha_i$ and $\delta_i$ are given by
\begin{align}
\label{coef1}
\alpha_i&=\frac{2(-1)^i}{(2M-1)(2M-i)!(i-1)!}
\\& \ \ \ \nonumber \times \int_{0}^{2M-1}\prod_{\stackrel{k=1}{k \neq i}}^{2M}\left(u-k+i\right)du,\\
\label{coef2}
\delta_i&=\Delta\cos\left(\frac{(i-i)\pi}{2M-1}\right),
\end{align}
respectively. The number of terms in the summation is related to the values of $K$ and $\Delta$; as argued in \cite{Durgin2002}, setting $M> K\Delta$ suffices to closely match the exact PDF in (\ref{eq:app01}).

Since (\ref{eq:app03}) corresponds to the PDF of a Rician power envelope, expression (\ref{eq:app02}) allows for approximating the TWDP distribution in terms of a mixture of $2M$ Rician distributions.

Following a similar reasoning as in Appendix \ref{App1}, the approximate PDF for the FTR fading power envelope can be obtained by averaging (\ref{eq:app02}) with $K_u=uK$ over all possible realizations $u$ of the random variable $\zeta$, which follows a Gamma distribution as indicated in (\ref{eq:03}). Thus, the mixture of Rician PDFs in (\ref{eq:app02}) averaged over a Gamma distribution leads to the following approximate expression of the FTR power envelope PDF:
\begin{equation}
\begin{split}
\label{eq:app05}
\widehat f_{\gamma}(\gamma)\approx \sum_{i=1}^{M}  \frac{\alpha_i}{2} & \left\{\mathcal{G}_m\left(\gamma;\beta,K(1-\delta_i)\right) \right.
\\& \left. +\mathcal{G}_m\left(\gamma;\beta,K(1+\delta_i)\right)\right\},
\end{split}
\end{equation}
where
\begin{align}
\label{eq:app06}
\mathcal{G}_m\left(\gamma;\beta,K\right)&\triangleq \int_0^{\infty}\mathcal{F}\left(\gamma;\beta,uK\right)f_{\zeta}(u)du\\
&= \frac{{m^m }}{{\Gamma \left( m \right)}} \int_0^{\infty}\mathcal{F}\left(\gamma;\beta,uK\right)u^{m-1}e^{ - mu}du\nonumber\\
& = \frac{m^m \beta}{(K+m)^m}e^{-\beta\gamma}{}_1F_1\left(m,1;\frac{K\beta\gamma}{K+m}\right)\nonumber.
\end{align}
where ${}_1F_{1}(\cdot,\cdot;\cdot)$ is the Kummer confluent hypergeometric function, and the same steps as in \cite[App. A]{Paris2014} have been used to derive the last equation. Thus, the FTR fading power envelope PDF corresponds to a mixture of $2M$ Rician shadowed PDFs \cite{Abdi2003}, which is in coherence with the connection between the TWDP and Rician distributions that exists in the absence of the additional fluctuation in the specular waves here considered.

Finally, noting that for $m\in\mathbb{Z}^+$ the Kummer hypergeometric function can be expressed in terms of the Laguerre polynomials by using \cite[eq. 24]{erdelyi1940} and the well-known Kummer transformation, we have
\begin{align}
\label{eq:app07}
{}_1{F}_1(m,1;z)=e^z \sum_{n=0}^{m-1}\binom{m-1}{n}\frac{z^n}{n!}.
\end{align}
Thus, combining (\ref{eq:app05})-(\ref{eq:app07}) yields the closed-form approximation for the PDF of the FTR fading power envelope in (\ref{eq:36b}), in terms of a finite sum of exponential functions and powers. Direct integration of (\ref{eq:app05}) yields the approximate expression for the FTR fading power envelope CDF in (\ref{eq:37b}). 

Strikingly, we must note that the additional fluctuation introduced by the FTR fading model does not cause any increase in mathematical complexity, but instead facilitates the mathematical tractability.

\section{Proof of Lemma 4}
\label{App5}
The MGF of $\gamma$ can be found as
\begin{equation}
\label{eq:56}
\begin{split}
M_\gamma  \left( s \right) = \int_0^\infty  {\int_0^\infty  {M_{\gamma _{_{u_1 ,u_2 } } } \left( s \right)} f_{\zeta _1 } \left( {u_1 } \right)f_{\zeta _2 } \left( {u_2 } \right)du_1 du_2 } ,
\end{split}
\end{equation}
where $f_{\zeta _i } \left(  \cdot  \right),\;i = 1,2$, is given by (\ref{eq:03}) and
\begin{equation}
\label{eq:57}
\begin{split}
  & M_{\gamma _{_{u_1 ,u_2 } } } \left( s \right) = {{1 + K_{u_1 ,u_2 } } \over {1 + K_{u_1 ,u_2 }  - \bar \gamma _{_{u_1 ,u_2 } } s}}  \cr 
  &   \cdot \exp \left( {{{K_{u_1 ,u_2 } \bar \gamma _{u_1 ,u_2 } s} \over {1 + K_{u_1 ,u_2 }  - \bar \gamma _{u_1 ,u_2 } s}}} \right)I_0 \left( {{{\Delta _{u_1 ,u_2 } K_{u_1 ,u_2 } \bar \gamma _{u_1 ,u_2 } s} \over {1 + K_{u_1 ,u_2 }  - \bar \gamma _{u_1 ,u_2 } s}}} \right), \cr
\end{split}
\end{equation}
with
\begin{equation}
\label{eq:58a}
\begin{split}
\bar \gamma _{u_1 ,u_2 }  = \left( {E_b /N_0 } \right)\left( {u_1 V_1^2  + u_2 V_2^2 ` + 2\sigma ^2 } \right),
\end{split}
\end{equation}
\begin{equation}
\label{eq:58}
\begin{split}
K_{u_1 ,u_2 }  = {{u_1 V_1^2  + u_2 V_2^2 } \over {2\sigma ^2 }},
\end{split}
\end{equation}
\begin{equation}
\label{eq:59}
\begin{split}
\Delta _{u_1 ,u_2 }  = {{2\sqrt {u_1 u_2 } V_1 V_2 } \over {u_1 V_1^2  + u_2 V_2^2 }}.
\end{split}
\end{equation}
Using a similar approach as the one in Appendix \ref{App1}, the double integral in (\ref{eq:56}) can be solved in closed-form with the help of \cite[p. 197 (20)]{Erdelyi1854} and \cite[p. 215 (11)]{Erdelyi1854}, yielding (\ref{eq:54}).

\bibliographystyle{IEEEtran}
\bibliography{bibjavi}

\quad

\end{document}